\newtheorem{Theorem}{Theorem}
\theoremstyle{definition}
\newcommand{\vect}[1]{\mathbf{#1}}
\newcommand{\xvec}{\vect{x}}
\DeclareMathOperator*{\argmin}{\arg\min}
\title{A Hybrid Quantum-Classical Paradigm to Mitigate Embedding Costs in Quantum Annealing}
\author{Alastair A.\ Abbott\thanks{University Grenoble Alpes, CNRS, Grenoble INP, Institut N\'eel, 38000 Grenoble, France}, Cristian S.\ Calude\thanks{Department of Computer Science, University of Auckland, Private Bag 92019, Auckland, New Zealand}, Michael J.\ Dinneen\footnotemark[2] and Richard Hua\footnotemark[2]}
\date{\today}
\begin{document}

\maketitle

\begin{abstract}
	Despite rapid recent progress towards the development of quantum computers capable of providing computational advantages over classical computers, it seems likely that such computers will, initially at least, be required to run in a hybrid quantum-classical regime.
	This realisation has led to interest in hybrid quantum-classical algorithms allowing, for example, quantum computers to solve large problems despite having very limited numbers of qubits.
	Here we propose a hybrid paradigm for quantum annealers with the goal of mitigating a different limitation of such devices: the need to embed problem instances within the (often highly restricted) connectivity graph of the annealer.
	This embedding process can be costly to perform and may destroy any computational speedup. 
	In order to solve many practical problems, it is moreover necessary to perform many, often related, such embeddings.
	We will show how, for such problems, a raw speedup that is negated by the embedding time can nonetheless be exploited to give a real speedup.
	As a proof-of-concept example we present an in-depth case study of a simple problem based on the maximum weight independent set problem.
	Although we do not observe a quantum speedup experimentally, the advantage of the hybrid approach is robustly verified, showing how a potential quantum speedup may be exploited and encouraging further efforts to apply the approach to problems of more practical interest.  
\end{abstract}
	
Quantum computation has the potential to revolutionise computer science, and as a consequence has, since its inception, received a great deal of attention from theorists and experimentalists alike.
Although much progress has been made through the concerted efforts of the community, we are still some distance from being able to build sufficiently large-scale universal quantum computers to realise this potential~\cite{Ladd:2010aa,Barends:2016aa}.

More recently, however, significant progress has been made in the development of special-purpose quantum computers.  
This has been driven by the realisation that, by dropping the requirement of being able to efficiently simulate arbitrary computations and relaxing some of the constraints that make large-scale universal quantum computing difficult (e.g., the ability to apply gates to arbitrary pairs of, possibly non-adjacent, qubits), such devices can be more easily engineered and scaled.  
The expectation is that with this approach one may be able to exploit some of the capabilities of quantum computation---even if its full abilities are for now beyond our reach---to obtain lesser, but nevertheless practical, advantages in  practical  applications.  Quantum annealers, which solve particular optimisation problems, exemplify this approach, and significant progress has been made in recent years towards engineering moderately large-scale such devices~\cite{Johnson:2011aa,Boixo:2016aa}.  
This approach has been pursued particularly zealously by D-Wave, who have developed quantum annealers with upwards of 2000 qubits (e.g., the D-Wave 2000Q\texttrademark{} machine~\cite{dwavesys2017}), and are thus of sufficient size to tackle problems for which their performance can meaningfully be compared to classical computational approaches. 

In this paradigm, however, it is non-trivial to compare the performance of quantum solutions to classical ones, since the focus is on obtaining  practical  gains in domains where heuristics tend to be at the core of the best classical approaches.
Indeed, this issue is at the heart of recent debate surrounding the performance of D-Wave machines~\cite{Shin:2014aa,Cho:2014aa}.
In particular, instead of focusing on asymptotic analyses, one must compare the performance of classical and quantum devices
empirically.
But performing benchmarks fairly is difficult, especially when there is often debate as to which classical algorithm should be taken for comparison~\cite{Ronnow:2014aa,King:2015aa,King:2015ab,Heim:2015aa}.
This is further complicated by the crucial realisation that such special-purpose quantum devices are operated in a fundamentally different way to the classical ones with which they are often compared:
typically, they operate in conjunction with a non-trivial pipeline of classical pre- and post-processing whose contribution is far from negligible on the performance of the device, and may even be the difference between obtaining a quantum speed-up or not. 
Note that such pre- and post-processing costs may also arise when generic classical solvers (e.g.~Integer Programming or SAT solvers) are used for optimisation problems, and although such solvers may not be the fastest classical algorithms for a given problem they are nonetheless of much practical interest and, when compared to quantum annealers, this processing pipeline should similarly be taken into account.

In this paper, motivated by the need to take into account the cost of  classical processing in benchmarking quantum annealers, we propose a hybrid quantum-classical approach for developing algorithms that can mitigate the cost of this processing.
In particular,  we focus on D-Wave's quantum annealers, where this processing involves a costly classical ``embedding'' stage that maps an arbitrary problem instance into one compatible with D-Wave's limited connectivity constraints.
This embedding is generally very time-consuming, and experimental studies indicate that its quality can have strong effects on performance~\cite{jobshopschedulingproblem2016}. 
Indeed, hybrid approaches themselves have previously been used to reduce the cost and size of these embeddings~\cite{boostingQA2017}.
We formulate a hybrid approach that can mitigate this cost on problems where many related embeddings must be performed by modifying the problem pipeline to reuse or modify embeddings already performed, thereby allowing any potential advantage to be accessed more directly~\cite{Calude:2015aa}.
A similar type of approach has previously been suggested as a theoretical means to exploiting Grover's algorithm~\cite{Lanzagorta:2005aa}, and differs from recent hybrid approaches for quantum annealing~\cite{aqc_heuristic2011,Tran:2016aa,McClean:2016aa,Chancellor:2016aa,Grass:2016aa} and computing~\cite{Bauer:2016aa,Li17} that instead aim to provide quantum advantages in situations where far fewer qubits are available than would be needed to execute a complete quantum algorithm for the problem in question~\cite{maxindepsetqubo2017,10.3389/fict.2016.00014,bian14}.
Research thus far has focused on using quantum annealing to solve problems for which only a single embedding is required.
The hybrid approach we propose therefore draws attention to the fact problems to which it can be applied---which require many embedding steps---are more promising candidates for observing practical quantum speedups, and hence serves also to help in guiding the search for problems suitable for quantum annealing.

Having outlined this hybrid computing approach, we then present a hybrid algorithm that is based around a D-Wave solution to the maximum-weight independent set (MWIS) problem.
Although the problem this algorithm solves, called the dynamically weighted MWIS problem, perhaps has limited independent interest and represents a rather simple application of our more general approach, it serves as a strong proof-of-concept for more complex algorithms, and we reinforce this by implementing it experimentally on a D-Wave 2X machine~\cite{DWave2X}.
The results of the experiment show a large improvement of the hybrid algorithm over a standard quantum annealing approach, in which the embedding process is naively repeated many times.
We further compare the hybrid algorithm to a standard classical algorithm.
Although we do not observe an overall speedup using the hybrid algorithm, the scaling behaviour of this approach compares favourably to that of the classical algorithm, leaving open the possibility of future speedups for this problem.

The outline of this paper is as follows.
In Section~\ref{sec:dwaveFramework} we present an overview to (D-Wave's approach to) quantum annealing and benchmarking such devices.
In doing so, we are deliberately thorough and pedagogical, since unfair or poor benchmarking has been the source of much misunderstanding regarding quantum speedups, and is crucial to the approach we outline.
In Section~\ref{sec:hybrid} we present, in a general setting, our hybrid paradigm.
In Section~\ref{sec:caseStudy} we provide an illustrative case study, applying our approach to the dynamically weighted maximum-weight independent set problem and compare its performance on a   D-Wave 2X machine to the standard quantum annealing pipeline.
Finally, in Section~\ref{sec:conclusions} we present our conclusions.

\section{D-Wave's quantum annealing framework}
\label{sec:dwaveFramework}

\subsection{Quantum annealing}

Quantum annealing is a finite temperature implementation of adiabatic quantum computing~\cite{Farhi:2000aa}, in which the optimisation problem to be solved is encoded into a Hamiltonian $H_p$ (the quantum operator corresponding to the system's energy) such that the ground state(s) of $H_p$ correspond(s) precisely to the solution(s) to the problem (of which there may be several).
The computer is initially prepared in the ground state of a Hamiltonian $H_i$, which is then slowly evolved into the target Hamiltonian $H_p$.
This computation can be described by the time-dependent Hamiltonian $H(t)=A(t)H_i + B(t)H_p$ for $0\le t \le T$, where $A(0)=B(T)=1$ and $A(T)=B(0)=0$.
$T$ is called the \emph{annealing time} and the functions $A$ and $B$ determine the \emph{annealing schedule} (for details on D-Wave's schedule, see Refs.~\cite{Johnson:2011aa} and~\cite{King:2016aa}).

If the computation is performed sufficiently slowly, the Adiabatic Theorem guarantees that the system will remain in a ground state of $H_p$ throughout the computation and the final state will thus correspond to an optimal solution to the problem at hand~\cite{Farhi:2000aa}.
In the ideal adiabatic limit, the time required for such a computation scales as the inverse-square of the minimum spectral-gap\footnote{Determining the minimum spectral gap, and thus the time required for computation, is unfortunately itself a computationally difficult problem~\cite{King:2014aa}.} (i.e., the minimum difference between the ground and first excited states of $H(t)$).
However, in the finite temperature regime of quantum annealing, a trade-off must be found between evolving the system sufficiently slowly and avoiding the perturbing affect of the environment.
As a consequence, the final state is only a correct solution with a certain probability, and the (hence probabilistic) computation must be repeated many times to obtain the desired solution (or a sufficiently close approximation thereof)~\cite{Johnson:2011aa,McGeoch2014}.

\subsection{Quadratic unconstrained Boolean optimisation}

Although the adiabatic computational model is quantum universal~\cite{Mizel:2007aa}, the recent success of quantum annealing has come about by focusing on implementing specific types of Hamiltonians that are simpler to engineer and control, despite the fact they might not be capable of efficiently simulating arbitrary quantum circuits.
In particular, D-Wave's devices can be modelled by a two-dimensional Ising spin glass Hamiltonian, and it is thus capable of solving the \emph{Ising spin minimisation problem}, a well-known NP-hard optimisation problem~\cite{Istrail:2000aa,Johnson:2011aa}.
This problem is equivalent, via a simple mapping of spin values ($\pm 1$) to bits (0 or 1), to the \emph{Quadratic Unconstrained
Boolean Optimisation (QUBO) problem}~\cite{Choi:2008aa}. In this paper we will use this formulation, as it will allow us to represent
in detail a little more compactly the algorithms.

The QUBO problem is the task of finding the input $\vect{x}^*$ that minimises a quadratic objective function of the form $f(\vect{x}) = \vect{x}^{T}Q\vect{x}$, where $\vect{x}=(x_1,\dots,x_n)$ is a vector of $n$ binary variables and $Q$ is an upper-triangular $n \times n$ matrix of real numbers:
\begin{equation}
\label{eqn:qubo}
\vect{x}^* = \argmin_{\xvec}\vect{x}^{T}Q\vect{x}  = \argmin_{\xvec} \sum_{i\leq j} x_i Q_{(i,j)} x_j, \text{ where } x_i \in \{0,1\}.
\end{equation}
Note that arbitrary quadratic objective functions $g$ can be converted to this form.
Since $x_i^2=x_i$ for $x_i=0$ or $1$, linear terms of $g$ can be encoded as the diagonal entries of a $Q$ for $f$.
Furthermore, any constant terms in $g$ can be ignored since they do not affect the objective minimisation with respect to $\xvec$.

In the quantum annealing model of the QUBO problem, each $x_i$ corresponds to a qubit while $Q$ defines the problem Hamiltonian $H_p$.
Specifically, the non-zero off-diagonal terms $Q_{(i,j)}$, $i<j$, correspond to couplings between qubits $x_i$ and $x_j$, while the diagonal terms $Q_{(i,i)}$ are related to the local field applied to each qubit.
For a given QUBO problem $Q$, these couplings may be conveniently represented as a graph $G_L=(V_L,E_L)$ representing the interaction between qubits, where $V_L=\{1,\dots,n\}$ is the set of qubits and
$E_L = \{\{i,j\} \mid Q_{(i,j)}\neq 0,\ i < j\}$ are the edges representing the couplings between qubits.
For reasons that will soon be apparent, we will refer to such a graph for a given QUBO problem as the \emph{logical graph}, and the set of qubits the QUBO problem is represented over the \emph{logical qubits}.

\subsection{Hardware constraints and embeddings}\label{sec:Chimera}

In practice, it remains exceedingly difficult to control interactions between qubits that are not physically near to one another, and as a result it is not possible to directly implement directly any instance of the QUBO problem: this would require directly coupling arbitrary pairs of qubits, which is currently infeasible. 
Instead, the couplings possible on a quantum annealer are specified by a graph $G_P=(V_P,E_P)$, where $V_P$ is the set of qubits on the device, and an edge $\{i,j\}\in E_P$ signifies that qubits $i$ and $j$ can be physically coupled.
The graph $G_P$ is called the \emph{physical graph}, and the qubits $V_P$ are the \emph{physical qubits}~\cite{Choi:2008aa,Lechner:2015aa}.

The physical graphs implemented on D-Wave's existing devices are \emph{Chimera graphs} $\chi_{k}$, which are $k\times k$ grids of $K_{4,4}$
graphs, with connections between adjacent ``blocks'' as shown in Figure~\ref{fig:ChimeraGraph}.%
\footnote{It is possible to define a more general family of $n\times m \times L$ Chimera graphs that are
$n\times m$ grids of $K_{L,L}$ graphs, as in Ref.~\cite{dwavebroadcast2016}. However, all devices to date have been square grids of $K_{4,4}$ graphs and, so that, in order to talk more precisely about scaling behaviour, we adopt the convention of fixing $L=4$ and $n=m$~\cite{King:2014aa,King:2015aa}. This is further justified by noting that increasing $L$ involves increasing the \emph{density} of qubit couplings, which is technically much more difficult than increasing the grid size.} 
Specifically, each qubit is coupled with 4 other qubits in the same $K_{4,4}$ block and 2 qubits in adjacent blocks (except for qubits in blocks on the edge of the grid, which are coupled to a single other block).
See Ref.~\cite{Saket:2013aa} for a more formal definition of the Chimera graph structure.

\begin{figure}[t]
\begin{center}
\includegraphics[scale=0.5]{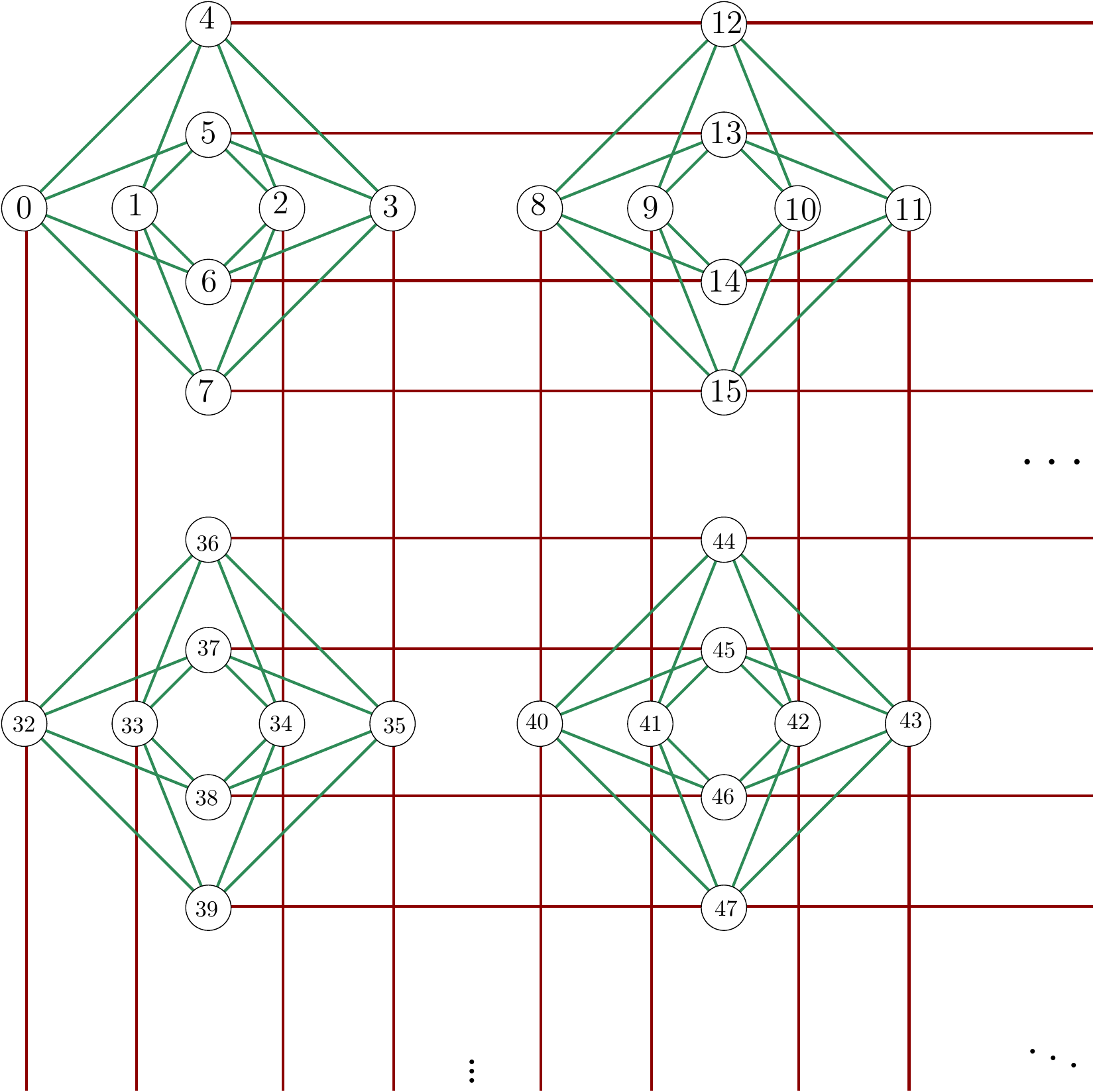}
\caption{A portion of a Chimera graph, showing four $K_{4,4}$ blocks. In general, the graph $\chi_{k}$ consists of a $k\times k$ grid of such blocks, with connections between adjacent blocks as shown.}
\label{fig:ChimeraGraph}
\end{center}
\end{figure}

The Chimera graph is, crucially, relatively sparse and quasi-two-dimensional, with qubits separated by paths of length no longer than $2k$.
Although the specific choice of hardware graph is an engineering decision and may conceivably be changed in future devices, any
alternative physical graph is likely to have similar properties since the tradeoff between connectivity and practicability is a core
feature (and intrinsic limitation) of the current approach to quantum annealing~\cite{Choi:2011aa,Lechner:2015aa}.\footnote{Indeed, D-Wave recently announced that future devices will have a different physical graph, the ``Pegasus'' graph~\cite{pegasus2019}.}
It is therefore essential to take into account these limitations of the hardware graph in any approach to solving problems with quantum annealers.

Since the logical graph $G_L$ for a QUBO problem instance $Q$ will not, in general, be a subgraph of the physical graph $G_P=\chi_k$, the problem instance on $G_L$ must be mapped to an equivalent one on $G_P$.
This process involves two steps: first, $G_L$ must be \emph{embedded} in $G_P$, and secondly the weights of the QUBO problem (i.e., the non-zero entries in $Q$) must be adjusted so that valid solutions on $G_P$ are mapped to valid solutions on $G_L$.

The embedding stage amounts to finding a \emph{minor embedding} of $G_L=(V_L,E_L)$ into $G_P=(V_P,E_P)$~\cite{Barahona:1982aa,Choi:2008aa}, i.e., an embedding function $f:V_L\to 2^{V_P}$ such that
\begin{enumerate}
\item the sets of vertices $\{ f(v) \mid v \in V_L\}$ are disjoint;
\item for all $v \in V_{L}$, there is a subset of edges $E' \subseteq E_{P}$ such that $G' = (f(v), E')$ is connected;
\item if $\{u,v\} \in E_{L}$, then there exist $u',v' \in V_{P}$ such that $u' \in f(u)$, $v' \in f(v)$ and $\{u',v'\}$ is an
edge in $E_{P}$.
\end{enumerate}
Typically, this involves mapping each logical qubit to ``chains'' or ``blocks'' of physical qubits.
In general, a QUBO instance using $n$ logical qubits will require up to $O(n^2)$ physical qubits since the smallest Chimera graph in which the complete graph $K_{4k}$ can be embedded in is $\chi_{k}$, requiring $4k(k+1)$ physical qubits~\cite{Choi:2011aa,King:2014aa}.
The embedding thus already entails, in general, a quadratic increase in problem size which needs to be taken into account when benchmarking quantum annealers.

The problem of finding a minor embedding is itself computationally difficult~\cite{Choi:2008aa}.
Of course, if one has sufficiently many physical qubits to embed $K_n$ then any $n$-qubit logical graph can trivially be embedded into the physical graph.
However, this trivial embedding is generally rather wasteful since qubits are precious resources as the practical limits of quantum annealing are still constantly being pushed. 
Perhaps more importantly, as more physical qubits are required the amount of time needed to find a (sufficiently good) solution increases,
so even when such a naive embedding exists there may be a significant advantage in looking for smaller embeddings (the feasibility of a problem may even depend on it).
The embedding process may thus, in light of its computational difficulty, contribute significantly to the time required to solve a problem in practice. 
Currently, the standard approach to finding such an embedding is to use heuristic algorithms (see, e.g., Ref.~\cite{Cai:2014}).

The second stage, which ensures that the validity of solutions is preserved, involves deciding on how to share the weights associated with each logical qubit $v$ between the physical qubits $f(v)$ it it is mapped to.
Since the weights must all fall within a finite range\footnote{Physically, the quantum annealer requires that the QUBO weights satisfy $|Q_{(i,j)}|\le 1$ for all $i$ and $j$. 
An arbitrary problem specified by $Q$ must thus be scaled to satisfy this constraint.} and there is a limited analogue precision with which the weights can be set, this process can effectively amplify the relative effects of such errors and thus decrease the probability of finding the correct solution~\cite{Choi:2008aa,King:2014aa,Pudenz:2016aa,albash19,PhysRevA.88.062314}.
This stage thus further exemplifies the need to avoid unnecessarily large embeddings, but does not have the same intrinsic computational cost as the embedding process proper.

\subsection{Benchmarking quantum annealers}

Although from a theoretical perspective it is expected that general purpose quantum computers will provide a computational advantage over classical algorithms, there has been much debate over whether or not quantum annealing provides any such speedup in practice~\cite{Ronnow:2014aa,Calude:2015aa,Lechner:2015aa}.
Much of this debate has stemmed from disagreement over what exactly constitutes a quantum ``speedup'' and, indeed, how to determine if there is one~\cite{Ronnow:2014aa}. 
In this paper we will focus primarily on the run-time performance in investigating whether a quantum speedup is present, rather than the (empirically estimated) scaling performance of quantum algorithms.

One of the key points complicating this issue is the fact that, even in the standard circuit model of quantum computation, it is not generally believed that an exponential speedup is possible for NP-hard problems such as the QUBO problem~\cite{Aaronson:2010aa}.
Leading quantum algorithms instead typically provide a quadratic or low-order polynomial speedup~\cite{Furer:2008aa}.
In practice, heuristic algorithms are generally used to solve such optimisation problems and the probabilistic nature of quantum annealing means that it is also best viewed in this light~\cite{King:2014aa,Ronnow:2014aa}.
This means that, rather than theoretical algorithmic analysis, empirical measures are essential in benchmarking quantum annealing against classical approaches.

\subsubsection{Measuring the processing time}\label{sec:measure_time}

Good benchmarking will, first of all, need to make use of fair and comprehensive metrics to determine the running time of both classical and quantum algorithms for a problem.
In particular these need to properly take into account not only the ``wall-time'' of different stages of the quantum algorithm, but also its probabilistic nature.
To understand how this can be done, we first need to outline the different stages of the quantum annealing process~\cite{King:2014aa}.
\begin{enumerate}
\item \emph{Programming:} The problem instance is loaded onto the annealing chip (QPU), which takes time $t_{\text{prog}}$.
\item \emph{Annealing:} The quantum annealing process is performed and then the physical qubits are measured to obtain a solution; this takes time\footnote{Note that this is sometimes referred to as the ``wall clock time'' in the literature. For simplicity, we choose to englobe all times associated with an annealing cycle (e.g.\ readout and inter-sample thermalisation times) along with the annealing time \emph{per se} into $t_a$.} $t_a$.
\item \emph{Repetition:} Step 2 is performed $k$ times to obtain $k$ potential solutions.
\end{enumerate}
The \emph{quantum processing time} (QPT) is thus $$t_{\text{proc}}=t_{\text{prog}}+k\,t_a.$$
For any given run of a quantum annealer, there is a non-zero probability of obtaining a correct solution to the problem at hand,  which depends on both the annealing time $t_a$ and the number of repetitions $k$.
Moreover, for any specific problem instance, the optimal values of these parameters are not known \emph{a priori}, so the performance of a quantum annealing algorithm will be determined by the optimal values of these parameters for the hardest problems of a given size $n$~\cite{Ronnow:2014aa}.
On   D-Wave 2X (and earlier) devices, however, the minimal annealing time of $20\mu\text{s}$ has repeatedly been found to be longer than the optimal time~\cite{Hen:2015aa,King:2014aa,Ronnow:2014aa,Venturelli:2015aa}.

A relatively fair and robust way to measure the quantum processing time is  the ``time to solution'' (TTS) metric~\cite{Boixo:2014aa,Ronnow:2014aa}, which is based on the expected number of repetitions needed to obtain a valid solution with probability $p$ (one typically takes $p=0.99$).\footnote{It is possible to generalise the TTS method to a time-to-target (TTT) method~\cite{King:2015aa}, where one is interested in the expected time to obtain a solution that is sufficiently good with respect to some (perhaps problem-dependent) measure.
Although this approach is likely to be very useful in benchmarking larger  practical  problems, we focus on the TTS approach here (which can be seen as a specific case of TTT).\label{fn:TTT}}
If the probability per annealing sample of obtaining a solution is $s$ (which can be estimated empirically), then this is calculated as 
\begin{equation}\label{eq:k99}
	k_{p}=\frac{\log(1-p)}{\log(1-s)}\raisebox{0.7mm}{,}
\end{equation}
and the quantum processing time is thus calculated with this $k$ as $t_{\text{proc}}=t_{\text{prog}}+k_{p}\,t_a$.
Throughout the rest of the paper we will fix $p=0.99$ as is typically done, and thus consider $k_{99}$.

In practice, unfortunately, even for moderate problem sizes, quantum annealing (and, indeed, classical annealing) simply does not find a correct solution to many problem instances~\cite{Boixo:2014aa,Denchev:2016aa,Ronnow:2014aa}.
Thus, although no worst case running time for such problems can be calculated, it is often instructive to look at the QPT for restricted classes of problems of particular interest or of limited difficulty.
In particular, several authors have applied this to difficulty ``quantiles'', calculating the QPT for, e.g., the 75\% of problems that can be solved the quickest. 
Investigating how the QPT scales with problem difficulty in this way permits some comparison with classical algorithms where it would otherwise be difficult or even impossible~\cite{Boixo:2014aa,Ronnow:2014aa}.

Existing investigations have primarily focused on comparing directly the QPT with the processing time of a classical algorithm in order to look for what we call a ``raw quantum speedup''.
However, it is essential to realise that the time used by the QPU and measured by the QPT refers only to a subset of the processing required to solve a given problem instance using a quantum annealer.
Specifically, a complete quantum algorithm for a problem instance $P$ involves, as a minimum requirement, the following steps:
\begin{enumerate}
\item \emph{Conversion:} The problem instance $P$ must be converted into a QUBO instance $Q(P)$, typically via a polynomial-time reduction taking time $t_{\text{conv}}$.
\item \emph{Embedding:} The QUBO problem $Q(P)$ must be embedded into the Chimera hardware graph taking time $t_{\text{embed}}$.
\item \emph{Pre-processing:} The embedded problem is pre-processed, which involves calculating (appropriately scaled) weights for the embedded QUBO problem, taking time $t_{\text{pre}}$.
\item \emph{Quantum processing:} The annealing process is performed on the QPU, taking time $t_{\text{proc}}$.
\item \emph{Post-processing:} The samples are post-processed to choose the best candidate solution, check its validity, and perform any other post-processing methods to improve the solution quality\footnote{On D-Wave's annealer, for example, a local search may optimally be performed to improve the solution quality. The $k$ repetitions that are performed in the quantum processing step are broken into fixed ``batches'' of $k/b$ samples (where $b$ depends on the problem but not on $k$) and batches are post-processed in parallel with the annealing of the following one; this justifies the consideration of this post-processing as contributing towards the constant overhead $t_\text{post}$, as only the post-processing of the final batch contributes to $T_Q$. Note that such post-processing already constitutes a form of hybrid quantum-classical approach.}~\cite{King:2014aa,Pudenz:2016aa} taking time $t_\text{post}$. The QUBO solution must finally be converted back to a solution for the original problem $P$.
\end{enumerate}
The total processing time is thus\footnote{As a convention, we will use lower case letters $t$ for the timings of subtasks, and upper case $T$'s to denote overall times of computation.} 
\begin{equation}\label{eqn:Tq}
T_Q=t_{\text{conv}}+t_{\text{embed}}+t_\text{pre}+t_{\text{proc}}+t_\text{post}.
\end{equation}

The realisation that these other steps must be included in the analysis is emphasised by the fact that in   practical  problems the embedding time often dominates the time used by the annealer itself.
Previous investigations have largely avoided this by focusing on artificial problems ``planted'' in the Chimera graph so that no embedding is necessary~\cite{Boixo:2014aa,Denchev:2016aa,Hen:2015aa,Ronnow:2014aa,Shin:2014aa}. 
Although finding a raw speedup in such situations is clearly a necessary condition for a quantum speedup, it does not guarantee that any corresponding speedup will carry over into practical problems.

It is therefore the time $T_Q$ which should be used in a fair comparison with classical algorithms.
Note that this still makes use of the TTS approach discussed above, except one must now take into account the tradeoff between the quality of an embedding and the time spent finding it in order to determine the optimal annealing parameters.

\subsubsection{Comparing classical and quantum algorithms}
\label{sec:comparisonToClassical}

To properly benchmark quantum annealing against classical algorithms it is necessary not only to have fair measures of the cost of obtaining a solution, but one must also compare fairly the quantum annealer to a suitable classical algorithm.

Ideally, the performance of a quantum annealer should be compared against the best classical algorithm for the problem being solved.
In practice, such an algorithm is rarely, if ever, known, especially for problems where heuristics dominate, and certain algorithms may perform better on certain subsets of problems.
The best one can do in practice, then, is to look for a ``potential quantum speedup''~\cite{Ronnow:2014aa} by comparing against the best available classical algorithm for the problem at hand. 

Often, however, quantum annealers are also tested against specific classical algorithms of interest; a speedup in such benchmarking has been termed a ``limited speedup'' in Ref.~\cite{Ronnow:2014aa}.
Such studies are important since a limited speedup is, of course, a necessary condition for a real quantum speedup to be present.
This type of benchmarking has often been used, e.g., to compare quantum annealing to simulated annealing or simulated quantum annealing~\cite{Boixo:2014aa,Denchev:2016aa,Shin:2014aa,PhysRevApplied.6.054016}, and such comparisons have the extra benefit of comparing similar use cases---i.e., generic optimisation solvers rather than algorithms tailored to a particular problem and which might require significant development time.
Nonetheless, care should be made in interpreting results when benchmarking in this way, since much of the controversy regarding potential speedup with quantum annealing has arisen when ``limited'' speedups are claimed to have more general relevance.

Finally, it is important to make sure the performance measures for both quantum and classical algorithms are compatible.
That is, the classical processing time $T_C$ should be calculated using a TTS metric as for $T_Q$ (if the classical algorithm is deterministic, this simply reduces to the computation time), and should include all aspects of the classical computation, including pre- and post-processing and reading input.
Note that by including the cost of embedding in the quantum and classical processing times, we make sure that what we  calculate is a function of the problem size $n$ and not the number of physical qubits.

\section{Hybrid quantum-classical computing}
\label{sec:hybrid}

As we discussed in the previous section, most of the effort in determining whether or not quantum annealing can, in practice, provide a computational speedup has focused purely on determining the existence of a \emph{raw quantum speedup}, which does not take into account the associated classical  processing that is inseparable from a quantum annealer.
Such a raw speedup is certainly a necessary condition for practical quantum computational gains, and its study is therefore well justified. However, even if there is a raw speedup there are many reasons why this might not translate into a \emph{practical} quantum speedup.

A practical speedup is possible for a problem if we are able to give a quantum algorithm such that $T_Q<T_C$, where (we recall) $T_C$ is the classical processing time  for the best available classical algorithm for the problem.
From the definition of $T_Q$ in \eqref{eqn:Tq}, it is clear than, even if $t_{\text{proc}}<T_C$, the conversion, embedding and pre/post-processing may provide obstructions to obtaining a practical speedup.
In practical terms, the pre- and post-processing tend to add relatively minor (or controllable) overheads, but the conversion and embedding costs pose more fundamental problems.

The conversion stage can be problematic for two reasons.
First, if the conversion is slow, $t_{\text{conv}}$ may be sufficiently large to negate any speedup.
However, asymptotically $t_{\text{conv}}$ should be polynomial in the problem size $n$, and, in practice for problems suitable for annealing, $t_{\text{conv}}$ seems to be relatively small compared to $t_{\text{proc}}$ and thus has negligible impact on the ability to find an absolute speedup.

More importantly though is the fact that  the QUBO instance resulting from the conversion may be significantly larger than the original problem instance, and thus it can be too large to solve with current quantum annealers.
For example, Ref.~\cite{dwavebroadcast2016} studies the QUBO formulation of the well-known Broadcast Time Problem obtained through a reduction from Integer Programming. 
For instances of this problem on graphs with less than $20$ vertices, the corresponding QUBO formulation required up to $1000$ binary variables (and thus logical qubits) which, especially once the problem is embedded in the physical graph, is beyond the reach of current quantum annealing hardware.

The computational cost of embedding the QUBO instance in the hardware graph is, in absolute terms, even more of an obstruction to successful applications of quantum annealing in its current state.
As mentioned earlier, when using standard heuristic algorithms the embedding time $t_{\text{embed}}$ is generally (at best) comparable to $t_{\text{proc}}$ (and, indeed, $T_C$) and often much longer.
Like the issues associated with the conversion, if sufficiently many qubits are available (i.e., quadratic in the QUBO problem size) and can reliably be annealed, then this embedding can be done quickly and this problem could be neglected.
However, this is certainly not the current situation, and ways to mitigate the dominant effect of $t_{\text{embed}}$ will be needed if quantum annealing is to be successfully applied in its current state or imminent future.

These difficulties in turning a raw quantum speedup into a practical advantage have led to significant interest in ``hybrid classical-quantum'' approaches (also called ``quassical'' computations by Allen, see Ref.~\cite{Calude:2015aa}):  hopefully, by  combining quantum annealing with classical algorithms may allow otherwise inaccessible speedups to be exploited.%
\footnote{We note that hybrid approaches have been also proposed (explicitly and implicitly) in other models of quantum computation too. For example, measurement based computation can be seen a hybrid approach: one starts with a quantum state and performs iterative rounds of quantum measurements and classical computations determining future measurements~\cite{measurement_basedqc2016,Briegel09}.}
Several such hybrid approaches have aimed to overcome the resource limitation arising from the fact that   practical  problems typically require more qubits than are available on existing devices (as a result of the expansion in number of variables during the conversion stage discussed above)~\cite{McClean:2016aa,Tran:2016aa}.
Such proposals instead provide algorithms that utilise quantum annealing on smaller, more manageable subproblems before combining the results classically into a solution for the larger problem at hand.
Other hybrid approaches have aimed to combine quantum annealing with classical annealing and optimisation techniques, in particular by using quantum annealing to perform local optimisations and classical techniques to guide the global search direction~\cite{Chancellor:2016aa,Grass:2016aa}.
These approaches aim to make the most of both quantum advantages (e.g.\ tunnelling) and classical ones (the ability to read and copy intermediate states).

\subsection{Hybrid computing to mitigate minor-embedding costs}
\label{sec:hybridAlgGen}

Although hybrid approaches have also looked at improving the robustness and quality of embeddings~\cite{Vinci:2015aa}, to the best of our knowledge such approaches have not been used to try and mitigate the cost of performing the embedding itself, which, we recall, is often prohibitive to any speedup.
In this paper we propose a general hybrid approach to tackle precisely this problem.
In particular we aim to show how a raw speedup that is negated by the embedding time (i.e., in particular when $t_{\text{proc}}<T_C$ but $T_Q>T_C$) can nonetheless be exploited to give a practical speedup to certain computational problems.

Our approach is motivated by another hybrid quantum-classical algorithmic proposal which predates the rise of quantum annealing and was introduced with the aim of exploiting Grover's algorithm---the well-known black-box algorithm for quantum unordered database search~\cite{Grover:1996aa}---in practical applications~\cite{Lanzagorta:2005aa}.
The motivation in this case was the realisation that, although Grover's algorithm offers a provable quantum speedup, it applies in rather artificial scenarios: it assumes the existence of an unsorted quantum database, when generally a more practical database design would allow for even better speedups, and in most conceivable practical scenarios a costly pre-processing step is needed to prepare the database which immediately negates the quantum speedup.
The authors showed, however, that some more complex practical problems can be approached by solving a large number of instances of 
unstructured database searches on a single database---precisely the problem that Grover's algorithm is applicable to.
Specifically, they looked at practical problems in computer graphics, such as intersection detection in ray-tracing algorithms.\footnote{Here, one must determine the intersections between large numbers of \emph{a priori} unordered three-dimensional objects, which can be rephrased as a search for an initially unknown number of items in an unordered database.}
The need to run Grover's algorithm many times to solve such problems means that the cost of preparing and pre-processing the database can be averaged out over all the runs, thus allowing the theoretical quantum speedup to be recovered.
An important aspect of the hybrid approach of Ref.~\cite{Lanzagorta:2005aa} is that it is not just an algorithmic paradigm for using a quantum computer, but it is also concerned with determining which problems we should try and use the quantum computer to solve.

Although their hybrid approach applies to  a very different situation than that of quantum annealing, there are some clear similarities between the prohibitive costs of preparing the database for Grover's algorithm, and that of performing the embedding prior to annealing.
We thus suggest adopting an analogous approach of using a quantum annealer to solve more complex problems that require solving sets of related (sub)problems whose potential quantum speedup is hidden behind the cost of the embedding required to solve the (sub)problem.
In particular, it might be easier to observe (and thus take advantage of) a quantum speedup by looking at algorithms that require a large number of calls to a quantum annealer as a subroutine, rather than trying to observe a speedup for solving an individual problem instance on an annealer (e.g., a single instance of an NP-complete problem such as the Independent Set problem via a reduction to a single QUBO instance) as previous attempts to use quantum annealing have done.

The crucial condition for a problem to be amenable to this hybrid approach is that \textit{the repeated calls to the quantum annealer should be made with the same logical graph embedding}, or \textit{permit an efficient method to construct the embedding for one call from the previous ones.}
If this condition is satisfied, the cost of the embedding, $t_{\text{embed}}$, can thus be spread out over the several calls, allowing a raw quantum speedup to be exploited.
There are several conceivable ways such a scenario could naturally occur in realistic algorithmic problems, and we will discuss and analyse an example in detail in the following sections.
Perhaps the most trivial would be that where all (or most) solutions to a highly-degenerate problem are required to be found, rather than simply a single one.
Although such a scenario is clearly suitable for quantum annealing, given its intrinsic ability to randomly sample solutions, there are other, perhaps more subtle, situations where this hybrid approach could be applied.
For example, one may need to solve a large number of instances of a problem, $P_1,\dots,P_m$, where the instances $P_i$ differ in some parameters, but where the embedding is independent of these parameters (e.g., if they are encoded in the weights rather than couplings of the logical graph), or if the logical graphs $G_i$ of each instance $P_i$ differ only slightly and are all subgraphs of a single logical graph $G$ that can be embedded.\footnote{Of course, one would want $G$ to be not much larger than the $G_i$, otherwise the embedding of $G$ is unlikely to allow one to compute \emph{good} embeddings of the $G_i$.}
These examples are certainly not definitive, and other situations suitable for this hybrid approach are bound to be uncovered.

In order to see how this hybrid approach can help exploit a quantum speedup, we will consider the particularly simple case with the following general description of a quantum annealing algorithm based on the hybrid approach described above (a more precise analysis would necessarily depend in part on the algorithm in question): 
some initial classical processing is performed, the embedding of a logical graph into the physical graph is computed, $m$ instances of a QUBO problem are solved on a quantum annealer, with some classical pre- and post-processing occurring between instances, and some final classical computation is optionally performed.
We emphasise, however, that the same approach can be applied to cases where the embedding is reused in a less trivial manner, so long as the cost to go from the embedding of one subproblem to the next is small.
Indeed a key part of the challenge---and future research---is finding suitable problems or criteria for which this is the case; here, our goal is to simply outline the underlying paradigm.

More formally, let us call the overall problem the hybrid algorithm solves $R$, and the $m$ problem instances that must be solved to do so, $P_1,\dots,P_m$.
Recall that the time to solve a single instance $P_i$ on an annealer is $T_Q(P_i)$; as we noted earlier this is, in practical situations, generally dominated by the cost of the embedding and the quantum processing, so $T_Q(P_i)$ can be approximated, for simplicity, as
\begin{align*}
T_Q(P_i) &= t_{\text{conv}}(P_i) + t_{\text{embed}}(P_i) + t_\text{pre}(P_i) + t_{\text{proc}}(P_i) + t_\text{post}(P_i)\notag\\
&\approx t_{\text{embed}}(P_i) + t_{\text{proc}}(P_i),
\end{align*}
where we have explicitly included the dependence on the problem instance.
The hybrid algorithm will thus take time
\begin{align}\label{eqn:hybridBreakdown}
T_H(R) &\approx t_1(R) + t_{\text{embed}}(P_1) + \sum_i\big(t_{\text{proc}}(P_i) + t_2(P_i)\big)\notag\\
&\approx t_1(R) + t_{\text{embed}}(P_1) + \sum_i t_{\text{proc}}(P_i),
\end{align}
where $t_1(R)$ encapsulates any initial and final classical processing associated with combining the solutions $P_i$, and $t_2(P_i)$ is the classical calculation associated with each iteration, which we have assumed to be small compared to $t_{\text{proc}}(P_i)$ since this should simply encompass minor pre- and post-processing between annealing runs, and thus be negligible if the problem is amenable to the hybrid approach.\footnote{More precisely, one expects the annealing time to be exponential in general, and if an exponential amount of classical processing is also required, it seems likely that no speedup will be possible. 
This condition could nonetheless be relaxed to obtain an advantage with the hybrid approach, as long as  a raw speedup is still present
when the annealing and processing times are combined (i.e., $t_{\text{proc}} + t_2$), but negated by the embedding if the annealer is used in the standard, more naive, way; however, we make this assumption to simplify our analysis.}
Note that we have made use of the assumption that $t_\text{embed}(P_1)\approx t_\text{emded}(P_i)$ for $i>1$, which is a criterion on the suitability of a problem for this hybrid approach.

We note immediately that a standard approach with a quantum annealer, performing the embedding for each instance $P_i$, would take time
\begin{equation*}
T_\text{std}(R) \approx t_1(R) + \sum_i\big(t_{\text{embed}}(P_i) + t_{\text{proc}}(P_i)\big).
\end{equation*}
In practice, one could envisage exploiting classical parallelism to reduce the cost of performing the embedding $m$ times by a constant factor.
For simplicity, we will assume that such parallelism is not used, and as long as $m$ is large enough the same conclusions hold.
Thus, since in practice $t_{\text{embed}}$ is comparable to, if not larger, than $t_{\text{proc}}$, we already have \[T_H(R) \ll T_\text{std}(R).\]
Although this conclusion may seem somewhat trivial, it is important in that it shows already how annealing can provide much larger practical gains for such complex algorithmic problems.
Indeed, one may view this result as emphasising the need to choose problems that allow the classical overheads of quantum annealing to be negated.
Thus far, the focus has been on traditional algorithmic problems that are difficult to subdivide; by using quantum annealing in more complex algorithms, this hybrid paradigm allows the real performance of a quantum annealer to be more directly accessed.

More importantly,  \textit{it may allow a raw quantum speedup to be exploited practically.}
To see this, let us consider the case when the best classical algorithm can solve a single instance $P_i$ in time $T_C(P_i)$.\footnote{We emphasise that, since we are interested in practical, not only asymptotic, gains, we can not easily assume that $T_C(P_i)=T_C(P_j)$ for $i\neq j$.}
We are interested, in particular, in the case when a raw quantum speedup (i.e., $t_{\text{proc}}(P_i) < T_C(P_i)$)  is negated by the embedding (i.e., $T_Q(P_i) > T_C(P_i)$).
Although the standard classical approach to solving $R$  is to use the classical algorithm to solve each $P_i$, and would thus take time $t_1(R) + \sum_i T_C(P_i)$, we should not assume this is the best classical approach to solving $R$, and for a fair comparison the hybrid approach should be benchmarked against the best known classical algorithm for $R$.

It is, of course, possible that, for certain problems, a much more efficient classical algorithm exists for solving $R$ when $m$ is large enough (e.g., there might be an efficient way to map solutions of $P_i$ to $P_j$). Such problems are thus not suitable for such a hybrid approach, and so are not of particular interest to us.
Nonetheless, in general a classical algorithm for $R$ may be more intelligent than the standard approach as certain, necessarily minor,\footnote{If not, then again the problem is not suitable for the hybrid approach, as a much more efficient classical algorithmic approach exists.} parts of the computation are likely to be common to solving several $P_i$.
Specifically, we can thus rewrite $T_C(P_i)=t_3(P_i) + t_4(P_i)$, where $t_3$ is small compared to $t_4$.
The best classical algorithm can then, rather generally, be considered to take time 
\begin{align*}
T_C^{\text{best}}(R)&=t_5(R) + t_3(P_1) + \sum_i t_4(P_i)
= t_6(R) + \sum_i t_4(P_i),
\end{align*}
\noindent where $t_6(R)=t_5(R) + t_3(P_1)$ and $t_5(R)$ encapsulates any additional global processing (in analogy to $t_1(R)$ for the quantum approaches).
Crucially, unless the raw quantum speedup is small, we will also have $t_{\text{proc}}(P_i)<t_4(P_i)$.

It is thus easy to see that, 
\begin{quote} \em 
for large enough $m$ (i.e., number of $P_i$ to be solved), we have $T_H(R) < T_C^{\text{best}}(R)$,	
\end{quote}
and thus the raw quantum speedup will translate into an absolute speedup for the hybrid algorithm.
The precise value of $m$ for which such a speedup is obtained will, of course, depend on the problem instances themselves, since the runtime can in practice depend heavily on this. 
Moreover, although $m$ depends on the problem $R$ (it may, for example, scale with the problem size, or be fixed), this analysis shows that there are problems for which this hybrid approach can turn a raw quantum speedup into a practical one.

It is important to reiterate that the quantum (and, if applicable, classical) times should be calculated using the TTS metric for each problem instance in order to correctly take into account the probabilistic nature of the quantum (and, potentially, classical) algorithms, just as when benchmarking the performance of an annealer on individual problem instances.
The performance of the overall hybrid algorithm is thus itself probabilistic and assessed in a similar fashion.

Finally, we reiterate that such a hybrid approach can, of course, only provide a quantum speedup if a raw quantum speedup exists.
The existence of such speedups for practical problems remains heavily debated, but the purpose of the hybrid approach is to exploit such an advantage when or if it is present.

\section{Case study: Dynamically weighted maximum-weight independent set}
\label{sec:caseStudy}

To illustrate the proposed hybrid approach, we discuss in detail a concrete example both from a theoretical and experimental viewpoint.
We first present the problem, which is intended as a proof-of-concept example rather than one of any particular practical application, before discussing an experimental implementation on a D-Wave quantum annealer and analysing the results of this experiment.

Our problem is based on a variant of the well-known independent set problem, the maximum-weight independent set (MWIS) problem. 
More precisely, we consider the question of solving many instances of this problem with different (dynamically assigned) weights on the same graph.

\subsection{Maximum-weight independent set} \label{sec:MWIS}

Recall that an \emph{independent set} $V'$ of vertices of a graph $G=(V,E)$ is a set $V' \subseteq V$ such
that for all $\{u,v\} \in E$ we have $\{u,v\} \not\subseteq V'$.  

\begin{samepage}
\noindent{\bf Maximum-Weight Independent Set (MWIS) Problem}:\\[1.5ex]
\begin{tabular}{ll}
\emph{Input:} & A graph $G=(V,E)$ with positive vertex weights $w:V \rightarrow \mathbb{R}^+$. \\
\emph{Task:} & Find  an independent set $V' \subseteq V$ 
such that maximises
$\sum_{v \in V'} w(v)$ \\
& over all independent sets of $G$.
\end{tabular} 
\end{samepage}

Note that the number of vertices in a maximum weighted independent set may be of smaller size then the number 
for its maximum independent set.
For example, consider the weighted graph shown in Figure~\ref{fig:MWISexample}(a).  
The vertices $\{v_2,v_4\}$ have total weight $9$, while the larger set $\{v_0,v_1,v_3\}$ has only total weight $8$.

\begin{figure}[ht]
\begin{center}
\begin{tabular}{c@{\hspace*{1cm}}c}
\includegraphics[width=2.8in]{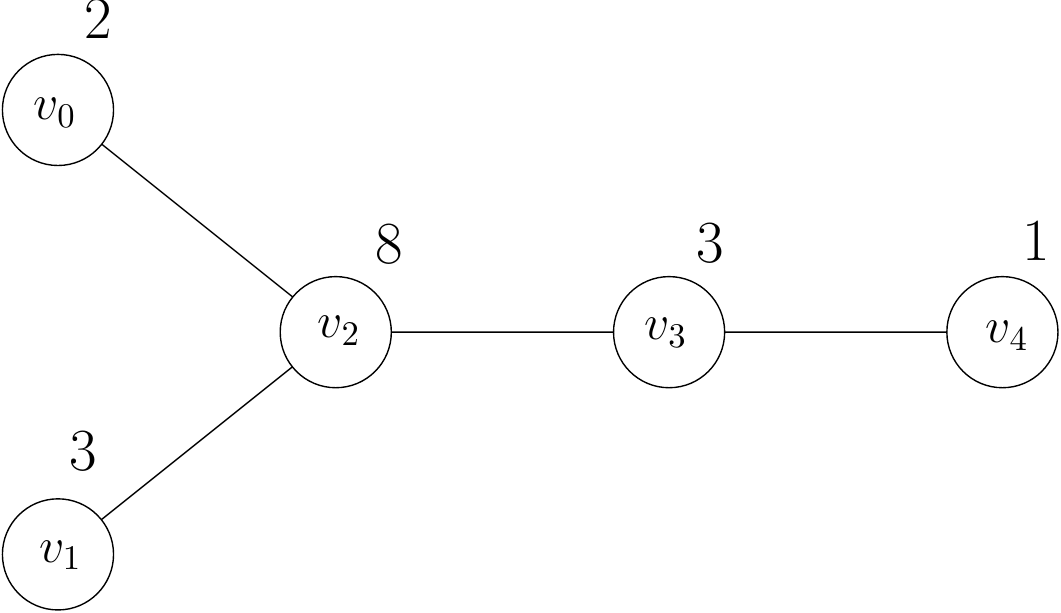}
& 
\large
\def\arraystretch{1.1}
\setlength{\tabcolsep}{3pt}
\begin{tabular}[b]{l|ccccc}
\multicolumn{1}{c}{} & $v_0$ & $v_1$ & $v_2$ & $v_3$ & $v_4$ \\ \cline{2-6}
$v_0$ & -2 &  0& 12&  0&  0 \\
$v_1$ & 0  & -3& 12&  0&  0 \\
$v_2$ & 0  &  0& -8& 12&  0 \\
$v_3$ & 0  &  0&  0& -3& 12 \\
$v_4$ & 0  &  0&  0&  0& -1 \\
\end{tabular}
~\\
&\\
(a) & (b)
\end{tabular}
\end{center}
\caption{An example of (a) a vertex-weighted graph and (b) its MWIS QUBO matrix (cf.\ Section~\ref{sec:quantumSoln}).}\label{fig:MWISexample}
\end{figure}

The general MWIS problem is NP-hard since it encompasses, by restriction, the well-studied non-weighted version~\cite{GJ79}. 
One should note, however, that for graphs of bounded tree-width, the MWIS problem is polynomial-time solvable using standard 
dynamic programming techniques (see Ref.~\cite{Marx10}).

We finish the presentation of the MWIS problem by mentioning an important application of it that was studied in Refs.~\cite{IS:app2} and~\cite{IS:app}. 
Hence, although the example we presented is intended simply as a proof-of-concept, it is not far removed from computational problems of interest. 
Suppose we have a wireless network consisting of several nodes and each node has a certain amount of data it needs to transfer. 
The problem consists in finding the set of nodes that should be given permission to transfer so that the total amount of data output is maximised under the condition that none of the transmissions can interfere with each other. 
If the vertices of the graph $G = (V,E)$ are devices in the network, the weight associated with each node represents the amount of data it needs to transfer and each edge in $E$ codes the potential interference between its two endpoints (so that only one of them can be transferring at a given time), then finding the optimal schedule for transmission is equivalent to finding the maximum-weight independent set of $G$.

\subsection{Dynamically weighted MWIS}

Although the MWIS can be readily transformed into a QUBO problem (as we show below), by itself it is not directly suitable for the hybrid approach we proposed. 
However, a simple variation that we propose here is indeed suitable.

Consider the network scheduling problem presented in the previous subsection. 
Suppose that each node in the network now has multiple messages it needs to send with various sizes, 
but the underlying structure of the graph remains the same (i.e., the same set of devices with unchanged potential interference), but the weight associated with each node will now change over time. 
Finding the optimal transmission schedule over time in this network is the same as finding the maximum weighted 
independent set of the graph with multiple weight functions. 

Formally, we have the following problem:

\begin{samepage}
\noindent{\bf Dynamically Weighted Maximum-Weight Independent Set (DWMWIS) Problem}:\\[1.5ex]
\begin{tabular}{ll}
\emph{Input:} & A graph $G=(V,E)$ with a set of weight functions $\mathcal{W}=\{w_1, w_2,$\\ & $\ldots, w_m\}$ where $w_i:V \rightarrow
\mathbb{R}^+$ for $1 \leq i \leq m$. \\
\emph{Task:} & Find independent sets $V_i \subseteq V$ that maximise 
$\sum_{v \in V_i} w_i(v)$ for each\\ &  $1 \leq i \leq m$.
\end{tabular} 
\end{samepage}

This problem is to solve the MWIS problem on $G$ for each of the $m$ weight assignments $w_i\in \mathcal{W}$.

For $m=1$ we obtain again the MWIS problem, but for larger $m$ the problem is suitable for our hybrid approach.

\subsection{Quantum solution}\label{sec:quantumSoln}

We now provide a QUBO formulation for the MWIS Problem.
Fix an input graph $G=(V,E)$ with positive vertex weights $w:V \rightarrow \mathbb{R}^+$.
Let $W=\max \{w(v) \mid v \in V \}$ and let $S>W$ be a ``penalty weight''.
We build a QUBO matrix of dimension $n=|V|$ such that:

\begin{equation}\label{MWISQUBO}
Q_{(i,j)} = \left\{
\begin{tabular}{cl}
0,       & \mbox{if } $i>j \mbox{ or } \{i,j\} \not\in E,$ \\
$-w(v_i)$, & \mbox{if } $i=j,$ \\
$S$,       & \mbox{if } $i<j \mbox{ and } \{i,j\} \in E.$ \\
\end{tabular}    
\right.
\end{equation}

\medskip

\begin{Theorem}
The QUBO formulation given in~\eqref{MWISQUBO} solves the MWIS Problem.
\end{Theorem}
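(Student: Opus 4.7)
The plan is to show that the QUBO minimiser $\vect{x}^*$, interpreted via the natural bijection between $\{0,1\}^n$ and subsets $V'_{\vect{x}} = \{v_i : x_i = 1\} \subseteq V$, yields a maximum-weight independent set. I would first expand the objective using the shape of $Q$. Because the diagonal entries contribute $-w(v_i)\, x_i$ (using $x_i^2 = x_i$) and the strictly upper-triangular entries contribute $S$ exactly on edges, one obtains the clean decomposition
\begin{equation*}
f(\vect{x}) \;=\; \vect{x}^T Q \vect{x} \;=\; -\sum_{v_i \in V'_{\vect{x}}} w(v_i) \;+\; S \cdot \bigl|\{\, e \in E : e \subseteq V'_{\vect{x}} \,\}\bigr|.
\end{equation*}

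Next I would show that any QUBO minimiser corresponds to an independent set, via a local improvement argument exploiting the choice $S > W$. Suppose $V'_{\vect{x}}$ is not independent, so some edge $\{v_i, v_j\}$ has both endpoints in $V'_{\vect{x}}$. Flipping $x_i$ from $1$ to $0$ (i.e., removing $v_i$) changes $f$ by $+w(v_i) - S \cdot d$, where $d \geq 1$ is the number of neighbours of $v_i$ currently in $V'_{\vect{x}}$. Since $S > W \geq w(v_i)$, this change is strictly negative, so $\vect{x}$ was not optimal. Hence every optimal $\vect{x}^*$ has $V'_{\vect{x}^*}$ independent.

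Finally, on independent sets the penalty term vanishes identically, so $f(\vect{x}) = -\sum_{v_i \in V'_{\vect{x}}} w(v_i)$, and minimising $f$ over independent sets is exactly maximising the total weight $\sum_{v \in V'} w(v)$ over independent sets. Combining the two observations, $\argmin_{\vect{x}} f(\vect{x})$ corresponds precisely to a solution of MWIS on $G$.

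I expect no real obstacle here; the only delicate point is the threshold choice of $S$. One should verify that any $S > W$ suffices (rather than requiring, say, $S$ larger than the total weight), and the local-flip argument above confirms this since a single violated edge already forces the change to be at most $w(v_i) - S < 0$. It would be worth remarking explicitly on this tightness of the penalty, since using an unnecessarily large $S$ would worsen the dynamic range of QUBO weights—an issue of practical importance given the finite analogue precision of the annealer, as discussed in Section~\ref{sec:Chimera}.
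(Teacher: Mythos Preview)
Your proof is correct and follows essentially the same approach as the paper: both use a local-flip argument (removing one endpoint of a violated edge strictly decreases the objective because $S>W$) to show that any QUBO minimiser must encode an independent set, and then observe that on independent sets the objective reduces to the negative total weight. Your explicit decomposition of $f(\vect{x})$ into weight and penalty terms makes the structure slightly more transparent than the paper's version, but the argument is the same.
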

\begin{proof}
Let $\xvec$ be a Boolean vector corresponding to an optimal solution to the QUBO formulation~\eqref{MWISQUBO}.
Let $D(\xvec) = \{v_i \mid x_i=1\}$ be the vertices selected by $\xvec$.

If $D(\xvec)$ is an independent set then $-x^*=-\xvec^T Q \xvec$ is its weighted sum.  
For two different solutions $\xvec_1$ and $\xvec_2$, which correspond to independent sets,
the smallest value of 
$\xvec_1^T Q \xvec_1$ and $\xvec_2^T Q \xvec_2$ is better.

Now assume $D(\xvec)$ is not an independent set. We will show that the objective function corresponding to $D(\xvec)$ can be improved. Indeed, 
since $D(\xvec)$ is not independent there must be two vertices $v_i$ and $v_j$ in $D(\xvec)$ such that $\{v_i,v_j\}$ is an edge in the graph.
Let $\xvec_1 = \xvec$ but set $x_i=0$, i.e.\ $D(\xvec_1)= D(\xvec) \setminus \{i\}$.  
We have $\xvec_1^T Q \xvec_1 < \xvec^T Q \xvec -W+w(v_i) \leq \xvec^T Q \xvec$.
(Note the second inequality is saturated if and only if  $v_i$ is a pendant vertex attached to $v_j$.)
We can repeat this process on improving $\xvec$ to $\xvec_1$ until we get an independent set. Thus the optimal value
of the QUBO holds for some independent set.  By the conclusion of the  second paragraph of this
proof, we know that a maximum weighted independent set corresponds to $x^*$.
\end{proof}

In Figure~\ref{fig:MWISexample}(b) we give the QUBO matrix for the example in Figure~\ref{fig:MWISexample}(a) with penalty entries~\cite{Dahl:2013aa,dwavebroadcast2016} $P=12 > W=8$. 
It is easy to see that with $\xvec = (0,0,1,0,1)$ we have the minimum value $x^* =\xvec^T Q \xvec = -9$. 
The maximum total weight is thus indeed $-x^*=9$, as expected. 

As a sanity check of the practicality of this solution on real quantum annealing machines, we implemented it on a   D-Wave 2X device.  
For this example it is easy to see that the graph in Figure~\ref{fig:MWISexample}(a) is a subgraph of $K_{4,4}$, hence a trivial embedding is possible.\footnote{We took, for example, the embedding $[v_0\to 0, v_1\to 1, v_2\to 4, v_3\to 2, v_4\to 7]$ into the first bipartite block of the Chimera graph shown in Figure~\ref{fig:ChimeraGraph}.}
The algorithm gave the expected optimal answer of $\{v_2,v_4\}$ approximately two-thirds of the time, and the non-optimal answer of $\{v_0, v_1, v_3\}$, a third of the time; occasionally other results, such as $\{v_2\}$ or $\{v_0,v_1,v_4\}$ were obtained, although such occasional incorrect solutions are not unexpected for quantum annealers.
Further details of the implementation, including source code, are available online in Ref.~\cite{Abbott18cdmtcs}.

In order to adapt the MWIS solution above to the DWMWIS problem, note that the locations of the non-zero entries of the QUBO formulation~\eqref{MWISQUBO} depend only on the structure of the graph and not on the weight function $w$.  
Thus, in order to solve the DWMWIS problem, for each weight assignment $w_i$ the same embedding of the graph into the D-Wave physical graph can be used, meaning that a hybrid algorithm based around the MWIS solution above can readily be implemented.

More specifically, following the hybrid algorithm described in Section~\ref{sec:hybridAlgGen} for instances $P_1, \ldots, P_m$ (where each $P_i$ uses weight function $w_i$), we perform the embedding once (entailing a time $t_{\text{embed}}(P_1)$) and then solve the MWIS problem for each weight assignment $w_i$ (taking times $t_{\text{proc}}(P_i)$) using the QUBO solution outlined above. 
Note that the iteration times $t_2(P_i)$, $1 \leq i \leq m$, in Eq.~\eqref{eqn:hybridBreakdown} thus correspond to the time to read in
and alter the coupling weights in the QUBO matrix.

\subsection{Classical baseline}

The main objective of studying the DWMWIS example in detail is to exhibit experimentally the advantage that the hybrid approach can provide over a standard annealing-based approach.
Nonetheless, it is helpful to further compare this to the performance of a classical baseline algorithm for comparison and to help highlight this advantage, even if we do not necessarily expect to see an absolute quantum speedup from the hybrid algorithm.

As we discussed in detail in Section~\ref{sec:comparisonToClassical}, one should ideally compare the hybrid algorithm against the best available classical algorithm for the same problem.
However, since our primary concern is not to show an absolute quantum speedup, and studying more closely the performance of various classical algorithms for the DWMWIS problem is somewhat beyond the scope of the present article, we will use a generic classical algorithm based on a Binary Integer Programming (BIP) formulation of the MWIS problem for illustrative purposes.
Both quantum annealing and BIP can be seen as types of generic optimisation solvers.
By using such a baseline, we also mimic how an engineer would map a new hard problem to a well-tuned optimisation solver (a SAT-solver or IP-solver being two natural generic choices). 
This process mimics the D-Wave model of requiring a polynomial-time reduction to the Ising/QUBO problem, which the quantum hardware solves, and allows us to compare similar approaches, even if for certain problem instances their very genericity may make them suboptimal.

To this end, for a given input graph $G=(V,E)$ with positive vertex weights $w: V \rightarrow \mathbb{R}^{+}$, we construct a BIP instance with $n=|V|$ binary variables as follows.
To each vertex $v_i$ in $G$ we associate the binary variable $x_i$, and for notational simplicity we will denote the collection of variables $x_i$ by a binary vector $\xvec = (x_0, x_1, \cdots, x_{n-1})$.
We thus have the BIP problem instance:
\begin{equation} \label{IP}
\begin{array}{ll@{}l}
\text{maximise }  \displaystyle\sum\limits_{v_i \in V} w(v_i)x_{i} \\
\text{subject to } \displaystyle x_i + x_j \leq 1 \mbox{ for all }  \{v_i,v_j\} \in E.
\end{array}
\end{equation}

Each constraint in~\eqref{IP} enforces the property that no adjacent vertices are chosen in the independent set while the objective function ensures an independent set with maximum sum value is chosen. 
Assuming we have the binary vector $\xvec$ which yields the optimal value of objective function~\eqref{IP}, we take $D(\xvec) = \{v_i \mid x_i=1\}$ to be the set of vertices selected as the maximum weighted independent set.
\\
\begin{Theorem}
The BIP formulation given in (\ref{IP}) solves the MWIS problem.
\end{Theorem}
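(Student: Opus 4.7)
The plan is to establish a weight-preserving bijection between feasible solutions $\xvec$ of the BIP \eqref{IP} and independent sets of $G$, from which the theorem follows immediately. Define, as in the statement, the map $\xvec \mapsto D(\xvec) = \{v_i \mid x_i = 1\}$; its inverse sends an independent set $V' \subseteq V$ to the indicator vector $\xvec^{V'}$ with $x_i = 1$ iff $v_i \in V'$. I will verify (i) feasibility of $\xvec$ is equivalent to $D(\xvec)$ being independent, and (ii) the objective value of \eqref{IP} at $\xvec$ equals $\sum_{v \in D(\xvec)} w(v)$.

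For step (i), first suppose $\xvec$ is feasible. If $D(\xvec)$ contained two adjacent vertices $v_i, v_j$ with $\{v_i, v_j\} \in E$, then $x_i = x_j = 1$, giving $x_i + x_j = 2$ and violating the constraint; so $D(\xvec)$ is independent. Conversely, if $V'$ is an independent set, then for any edge $\{v_i, v_j\} \in E$ at most one of $v_i, v_j$ lies in $V'$, so $x^{V'}_i + x^{V'}_j \le 1$, confirming $\xvec^{V'}$ is feasible. Step (ii) is immediate from the definition: $\sum_{v_i \in V} w(v_i) x_i = \sum_{i : x_i = 1} w(v_i) = \sum_{v \in D(\xvec)} w(v)$.

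Combining (i) and (ii), the optimisation in \eqref{IP} ranges (via the bijection) over exactly the independent sets of $G$, with identical objective values. Hence an optimal $\xvec^*$ corresponds to an independent set of maximum total weight, i.e.\ a solution to the MWIS problem; conversely a maximum-weight independent set $V^*$ yields an optimal $\xvec^{V^*}$. This establishes the theorem.

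There is no real obstacle here: the content of the proof is just checking that the constraints encode independence and that the linear objective encodes the weight sum. The only point requiring a sentence of care is making the correspondence genuinely bijective (which it is, since $D(\cdot)$ and $\xvec^{(\cdot)}$ are mutual inverses on feasible $\xvec$ and independent $V'$), so that optima on one side match optima on the other.
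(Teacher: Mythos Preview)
Your proof is correct and follows essentially the same approach as the paper: both verify that feasibility of $\xvec$ in \eqref{IP} is equivalent to $D(\xvec)$ being an independent set, and that the objective value coincides with the total weight of $D(\xvec)$. Your version is slightly more explicit in framing this as a weight-preserving bijection and in checking both directions, but the underlying argument is the same.
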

\begin{proof}
First, we show that $D(\xvec)$ is an independent set if and only if all the constraints in~(\ref{IP}) are satisfied. This is indeed the case as
if all the constraints are satisfied, then for each $\{v_i,v_j\}$ in $E$, at most one of them is in $D(\xvec)$ by its definition. 
On the other hand, if any one of the constraint is not satisfied, then it means $v_i$ and $v_j$ are both chosen,
thus $D(\xvec)$ is not 
an independent set. 

Now, let $\xvec$ be a binary vector corresponding to an optimal solution of BIP formulation~(\ref{IP}).
Let $D(\xvec) = \{v_i \mid x_i=1\}$ be the vertices selected by $\xvec$. Since $\xvec$ is the optimal solution, we already have 
all the constraints of~(\ref{IP}) satisfied and $D(\xvec)$ is therefore a valid independent set. The objective function will 
ensure that the selected independent set has the maximum value sum.
\end{proof}

The classical baseline%
\footnote{Our local linux machine, running Fedora 25 OS, consisted of an Intel Haswell i7 4.0GHz (overclocked to 4.5GHz) with 32GB DDR3
2400MHz RAM.}
we use in the analysis presented in the remainder of this section is based on an implementation of the BIP formulation in Sage
Math~\cite{sagemath}, which has a well developed and optimised Mixed Integer Programming library.
Note that this is an exact solver for BIP problems, whereas an annealer can also be used to find good approximate solutions. However, since we are using the TTS metric we thus treat the quantum annealer as an exact solver too, thereby ensuring a fair comparison. (If a TTT metric---see Footnote \ref{fn:TTT}---were instead considered one would need, for fairness, to compare the annealer to a classical approximation algorithm.)

To ensure that a fair comparison with the hybrid algorithm is possible, we formulate the classical algorithm for the overall DWMWIS problem such that \emph{the set of constraints in the BIP formulation is only computed once} (cf.~the discussion in Section~\ref{sec:hybridAlgGen}).
This is possible since (in analogy with the need to only perform the embedding once in the quantum solution) the changing weights do not change the constraints of the BIP formulation, and we make use of this to reuse parts of the computation where possible.
Note that the Sage environment contains a simple Python front-end interface to one
of many (Mixed) IP-solvers which are often written, optimised and compiled from
C.  We used the default Gnu GLPK as the back-end library but many popular
commercial solvers like COIN-OR, CPLEX or GUROBI could be equally used.  For
our small input instances, the classical solver choice would not matter much;
the scaling behaviour would be the same for our chosen illustrative NP-hard
problem.

\subsection{Experimental framework}

To study experimentally the performance of the hybrid DWMIWS algorithm, we compare the performance of three algorithms on a selection DWMWIS problem instances: the ``standard'' quantum algorithm, in which the embedding is re-performed for each weight assignment; the hybrid DWMWIS algorithm; and the classical BIP-based algorithm described above.

To this end we analyse the algorithms on a range of different graphs, initially choosing $156$ graphs from a variety of common graph families with between 2 and 126 vertices (which initial testing suggested should place most of them within the capabilities of the quantum annealer we used).
These graphs, including the so-called common graphs in SageMath~\cite{sagemath} with no more than 126 vertices and representatives from several well-known families of graphs, are natural and well studied examples spanning a range of sizes and with varying properties with which to test the performance of our hybrid DWMIS algorithm. Moreover, they were also used in Refs.~\cite{Dinneen:2017} and~\cite{dinneen19} to study other quantum annealing algorithms, allowing our results to be comparable to those.
The full list of graphs and some of their basic properties (order, size) can be found in the summary of results in 
Appendix~\ref{app:resultSummary}.
Each graph was used to generate a single DWMWIS problem instance with $m=100$ weight assignments, each randomly generated as floating point numbers rounded to 2 decimal places within the range $[0.0,1.0)$ using the default pseudo-random generator in Python.\footnote{This choice of weight distribution was made for simplicity, but one would expect similar behaviour for other distributions. In practice, using the full range of possible weights leads to better quantum annealing performance, so other distributions might require rescaling to optimise performance, adding additional technical---but not fundamental---complications.}
Although the choice of $m$ of weight assignments is somewhat arbitrary, our choice was made by the need to balance the ability to solve sufficiently large problems to be able to negate the embedding time against the limited access we had to the quantum annealer.
The problem instances were generated as standard adjacency list representations using SageMath~\cite{sagemath} with random weights assigned.

The hybrid DWMWIS algorithm outlined in Section~\ref{sec:quantumSoln} was implemented on a D-Wave 2X quantum annealer with $1098$ active physical qubits~\cite{DWave2X}.
Note that this is significantly more qubits than are needed to embed any of the graphs we consider. However, as we will see, for the larger graphs we considered D-Wave already struggled to find optimal solutions making further analysis impossible; indeed, this is why we did not initially select larger graphs for analysis despite the ability to embed them into the hardware graph.
The same procedure is used for the ``standard'' quantum algorithm, except the cost of the embedding is incurred for each weight assignment (as per Section~\ref{sec:hybridAlgGen}).
Full details of the implementations, data and results (i.e., source code, problem instances and outputs) are available online in Ref.~\cite{Abbott18cdmtcs}.

Since we are primarily interested in negating the impact of the embedding process in general applications, we made use of D-Wave's heuristic embedding algorithm~\cite{DWave:2013aa} to embed each logical graph in the physical graph.
While specialised embedding algorithms may be more effective in certain scenarios, the overall hybrid approach would still be applicable, and by adopting a generic algorithm our results have wider relevance.
Each graph was embedded 10 times to estimate $t_\text{embed}$ for each problem instance.
Unfortunately, due to the large number of samples often required to be run for each problem and restrictions on access to the annealer, we were unable to perform a full analysis with each embedding (recall the embedding is non-deterministic) and instead performed the analysis for a single such embedding.
This introduces a potential systematic error since the embedding generally affects the solution quality to some degree;  we will discuss this further in the analysis that follows.
The orders of the embeddings we used (i.e., the number of physical qubits required), which are useful in understanding the performance of the quantum annealer on individual problems, as well as the maximum chain length required in embedding are also given in Appendix~\ref{app:resultSummary}.
With the embeddings obtained, the couplings between the physical qubits were determined using the default approach of evenly distributing the logical couplings along chains provided by the D-Wave solver API.

Operational parameters for the D-Wave 2X device were determined via an initial testing round (see Refs.~\cite{DWave:timing} and~\cite{DWave:python} for further information on D-Wave timing parameters).
In line with previous research~\cite{Hen:2015aa,King:2014aa,Ronnow:2014aa,Venturelli:2015aa} (cf.~Section~\ref{sec:measure_time}) we found the minimal annealing time of $20\mu\text{s}$ to be optimal for all the graphs considered.
The programming thermalisation time, which specifies how long the quantum processor is allowed to relax thermally after being programmed with a QUBO problem instance, was chosen as its default value of 1000$\mu$s, as this was seen to produce satisfactory results.
Between anneals, the processor must similarly be allowed to thermalise, and the default $50\mu$s delay was used.
Reading out the result of each anneal takes 309$\mu$s on the   D-Wave 2X device, so this readout time (and to a lesser extent the thermalisation) dominated the actual annealing time.
With minor additional low level processing taken into account, each annealing ``sample'' has a fixed time of $380.2\mu$s.
Although the actual annealing time of 20$\mu$s was a minor part of each annealing cycle, this is likely to change in the future as larger problems necessitating longer annealing times become accessible. 
Moreover, future generations of the machine could have shorter relaxation periods and faster readout times (at least relative to the annealing time, if not in absolute terms) as the physical engineering of the processor is better developed~\cite{King:2014aa,King17}.

Finally, our tests were run with D-Wave's post-processing optimisation enabled.
While this adds a small overhead in time, this is well within the spirit of hybrid quantum-classical computing, and allowed us to solve more problems.
This post-processing method processes small batches of samples while the next batch is being processed~\cite{DWave:postprocessing}.
This ensures that it only contributes a constant overhead in time for each MWIS problem instance \emph{independent of the number of samples (and thus of $k_{99}$)}.

To estimate the TTS times $T_H$ and $T_\text{std}$ described in Section~\ref{sec:hybridAlgGen}, one must first estimate $k_{99}$, as defined in Eq.~\eqref{eq:k99}, for each weight assignment $w_i$.
This is done by estimating the probability of success $s_i$ for each such case as $N_\text{opt}/N_\text{total}$, where $N_\text{total}$ is the number of annealing cycles performed, while $N_{\text{opt}}$ denotes the number of times an optimal solution was found.
To determine this ratio accurately for each weight assignment, each problem instance was initially run twice with 1000 samples.
Problem instances for which an optimal solution was not found several times for every weight assignment were run a further 5 times; the hardest instances were eventually run a further two times with 2000 samples per run and, for one difficult graph (the complete bipartite graph $K_{12,12}$) a further 14 runs of 2000 samples.
By performing many runs (and since each weight assignment is considered separately), random noise due primarily to analogue programming accuracy is largely reduced, and $k_{99}$ is estimated more accurately.

Some problem instances remained unsolved after these runs (i.e., there was at least one weight assignment $w_i$ for which an optimal solution was never found so that $k_{99}$ was undefined) and such problem instances had to be abandoned; indeed, this was the limiting factor in the size of graphs analysed, preventing us from considering larger problems.
As a result, the initial 156 graphs were reduced to 124 for which a running time could be computed and analysed.
These graphs that we originally selected but for which further analysis could not be performed are listed separately for reference in Appendix~\ref{app:resultSummary} (see Table~\ref{table:resultSummary2}).
The fact that such cases were not uncommon despite the relatively modest size of the graphs (even the largest embedded graph required only 280 of the 1098 available physical qubits) highlights limitations of the current state of quantum annealing on more traditional (and, potentially, practical) computational problems.

\subsection{Results and analysis}

For each DWMWIS problem instance (i.e., for each graph $G$) the times $T_H$ and $T_\text{std}$ were calculated, following the approach described in Section~\ref{sec:hybridAlgGen}, as 

\begin{equation*}
	T_H=t_\text{embed} + \sum_i \big(t_\text{prog}(P_i) + k_{99}(P_i)t_\text{anneal} + t_\text{post}(P_i)\big)
\end{equation*}
and 
\begin{equation*}
T_\text{std} = \sum_i \big(t_\text{embed} + t_\text{prog}(P_i) + k_{99}(P_i)t_\text{anneal} + t_\text{post}(P_i)\big),	
\end{equation*}
where $k_{99}(P_i)$ is the $k_{99}$ value for weight assignment $w_i$ and $t_\text{anneal}=309\mu$s.
As noted in Section~\ref{sec:hybridAlgGen}, $T_\text{std}$ may be reduced by a small constant factor by exploiting classical parallelism, so $T_\text{std}$ as defined here constitutes an upper bound on the time of a traditional quantum annealing approach.
Both $t_\text{prog}(P_i)$ and $t_\text{post}(P_i)$ are of the order of $20$ms (although the latter varies by an order of magnitude more than in the former over different problem instances and runs).
Note that the processing time $t_\text{proc}$ defined earlier is, for this approach to the DWMWIS problem, given by $$t_\text{proc}=t_\text{prog}(P_i) + k_{99}(P_i)t_\text{anneal} + t_\text{post}(P_i).$$
The classical time $T_C$ was taken as the processor time for the classical algorithm described earlier.

A detailed summary of the overall times for each graph is given in Appendix~\ref{app:resultSummary}.
These results are summarised in Figures \ref{fig:TH_v_Tstd_v_TC}(a) and \ref{fig:TH_v_Tstd_v_TC}(b), which show how the hybrid times $T_H$ compare to both $T_\text{std}$ and $T_C$.
Error bars are calculated from the observed variation in $t_\text{embed}$, the number of optimal solutions found $N_\text{opt}$, and the post-processing time $t_\text{post}$.
Of these, the error in $t_\text{post}$ is the dominant factor, and largely arises from the uncontrollability of the post-processing environment, which is performed remotely within the D-Wave processing pipeline.
However, this variation did not result in any significant variation in success probability of the annealing, so it seems the computational effort expended on post-processing was nonetheless constant.
Indeed, we note that in some earlier runs the post-processing was performed 20 times faster with no noticeable change in the quality of solution.
Given that post-processing contributes non-negligibly to $T_H$ and $T_\text{std}$, this could significantly effect the overall times. We discarded these results to present a conservative analysis and the overall conclusions are not affected by this, but we note that, with increased control of the classical post-processing, the quantum times could be significantly reduced.

As noted in the previous section, practical and logistical constraints prevented us from taking the variation due to different embeddings of each graph fully into account.
To assess the possible magnitude of this effect, we tested one relatively difficult graph (Shrikhande) and found that consideration of the embedding roughly tripled the error in $T_H$, changing the value from $12,800^{+370}_{-240}\mu$s 
to $15,300\pm 1,280\mu$s, with the average size of the embedding being $67$ physical qubits but with a standard deviation of $6.5$, explaining much of this variation.  
While this variation would thus generally be a significant source of error, the variation it induces will not be large enough to affect any of our conclusions significantly, even if the inability to take this into account is admittedly regrettable.

\begin{figure}[t]
\begin{center}
\begin{tabular}{cc}
\includegraphics[width=0.5\textwidth]{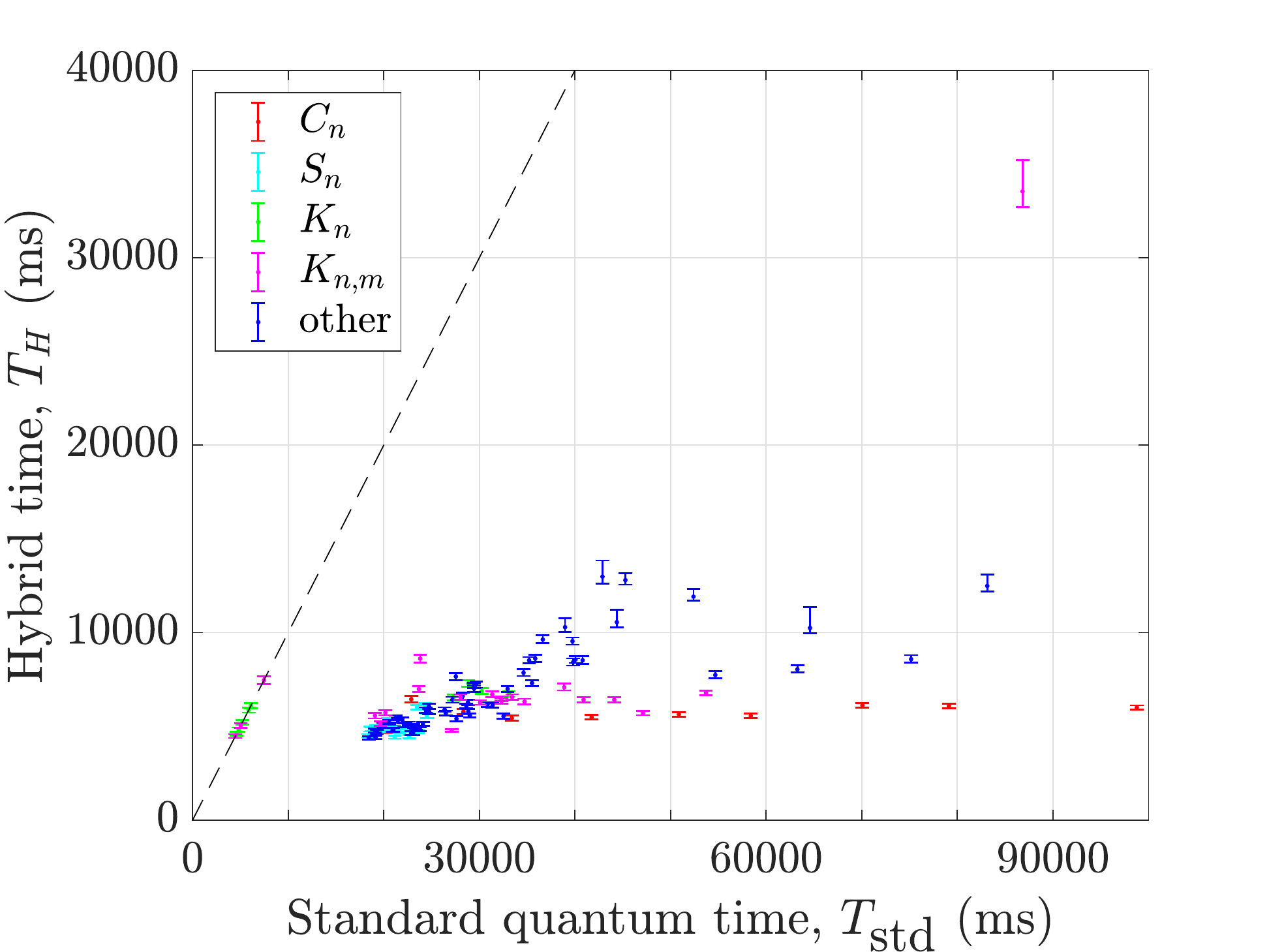} & \hspace{-5mm} \includegraphics[width=0.5\textwidth]{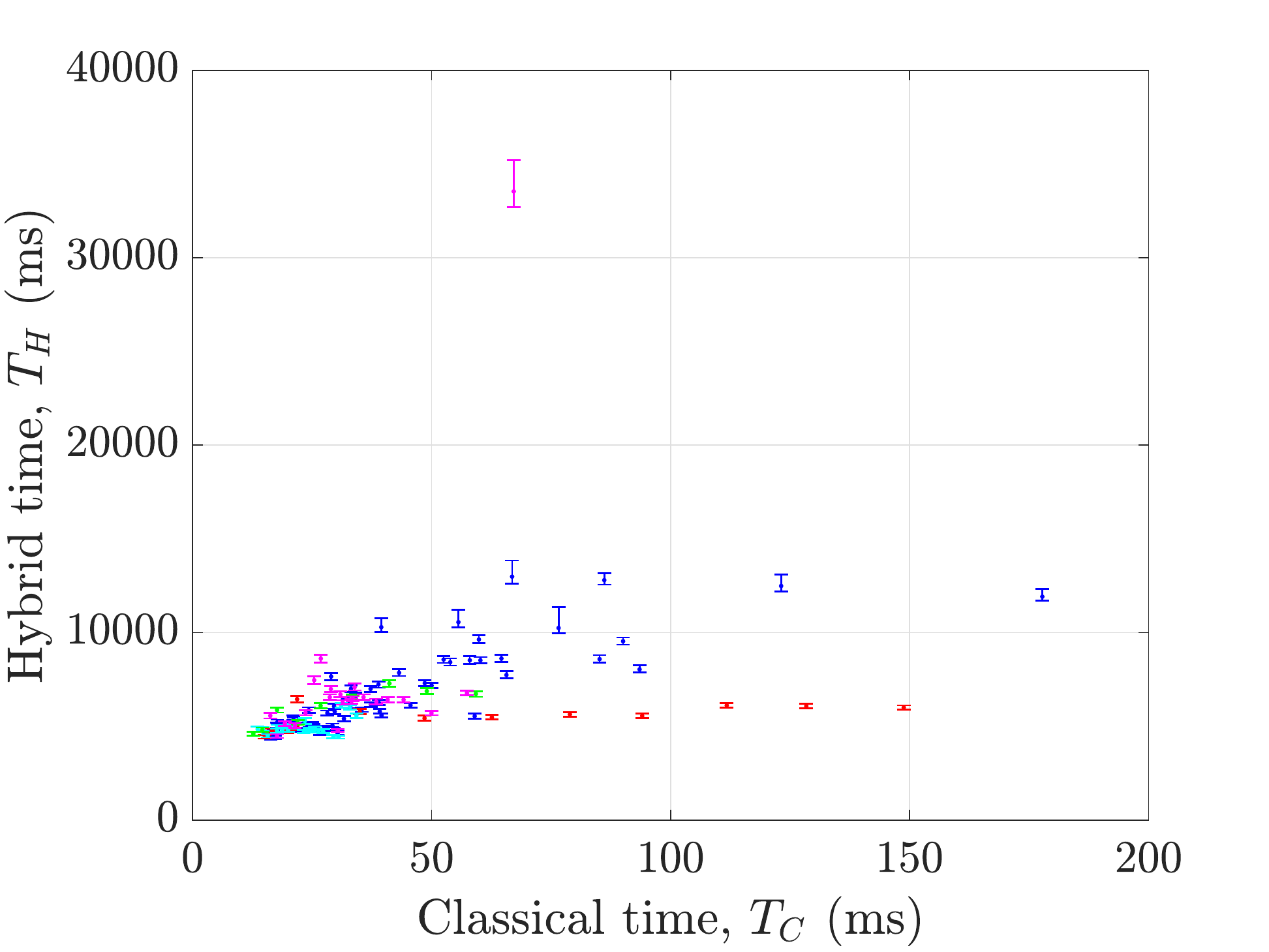}\\
(a) & (b)
\end{tabular}
\caption{(a) An upper bound for $T_\text{std}$ against $T_H$. The dashed line is $T_H = T_\text{std}$; the points falling on the line are correspond to graphs for which a trivial embedding was possible. (b) $T_C$ against $T_H$ for each DWMWIS problem instance. The different colours indicate specific graph families for reference: cycle graphs $C_n$, star graphs $S_n$, complete graphs $K_n$, and complete bipartite graphs $K_{n,m}$. All times are in ms.}\label{fig:TH_v_Tstd_v_TC}
\end{center}
\end{figure}

First and foremost, from the results shown in Figure~\ref{fig:TH_v_Tstd_v_TC}(a) the extent of the advantage of the hybrid approach is  evident.
Indeed, this is to be expected given that, for a given DWMWIS problem, they differ (by definition) by $99\times t_\text{embed}$.
Although this might seem a trivial confirmation of this fact, the results help illustrate the extent of the advantage that the hybrid approach can have for such problems, a consequence of the absolute cost of the embedding.

It is interesting to ask, moreover, how the advantage of the hybrid approach scales.
To understand this, we first look at how $t_\text{embed}$ scales since, as long as this remains a significant compared to the annealing time, this will largely determine the scaling of the hybrid advantage.
Recall, following the discussion of Section~\ref{sec:Chimera}, that although a poor embedding can be quickly found (given enough physical qubits), one may expect, in general, the time required to find a good embedding---as the heuristic embedder we use indeed tries to do---to scale exponentially with the graph order.

This is confirmed in Figure~\ref{fig:embedding}(a), showing $t_\text{embed}$ as a function of the number of vertices in a graph.
Moreover, from the figure one sees that, even for these relatively small graphs, $t_\text{embed}$ quickly approaches 1s.
The figure shows that there is a large variation in the embedding times, due in part to different behaviour on graphs from different families.
Indeed, some graph families are quite naturally easier to embed than others for the heuristic embedder; of course, for a known family of graphs one could generally find a good efficient embedding, but our interest is in understanding the scaling for a generic approach that should work on arbitrary graphs.

The scaling behaviour is clearer to see in Figure~\ref{fig:embedding}(b), which shows $t_\text{embed}$ instead as a function of the embedded graph order, i.e., the number of physical qubits required, since this accounts for much of the difference in difficulty in embedding different graphs.
There, a nonlinear regression analysis shows a much better fit with exponential scaling and, given that the embedded graph order scales at most quadratically with the logical graph order, this confirms the exponential scaling of the embedding time.

\begin{figure}[t]
\begin{center}
	\begin{tabular}{cc}
	\includegraphics[width=0.5\textwidth]{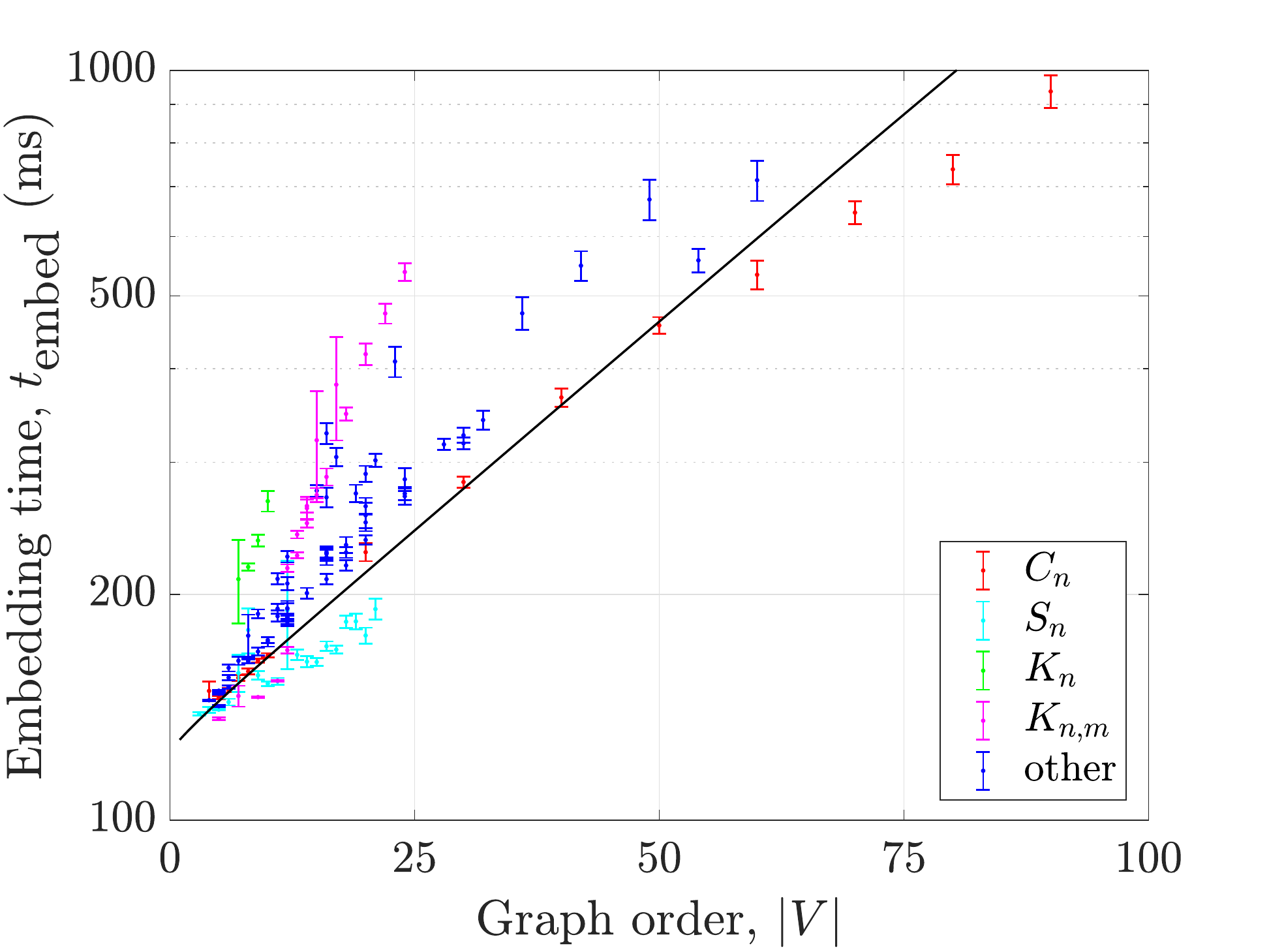} & \hspace{-5mm} \includegraphics[width=0.5\textwidth]{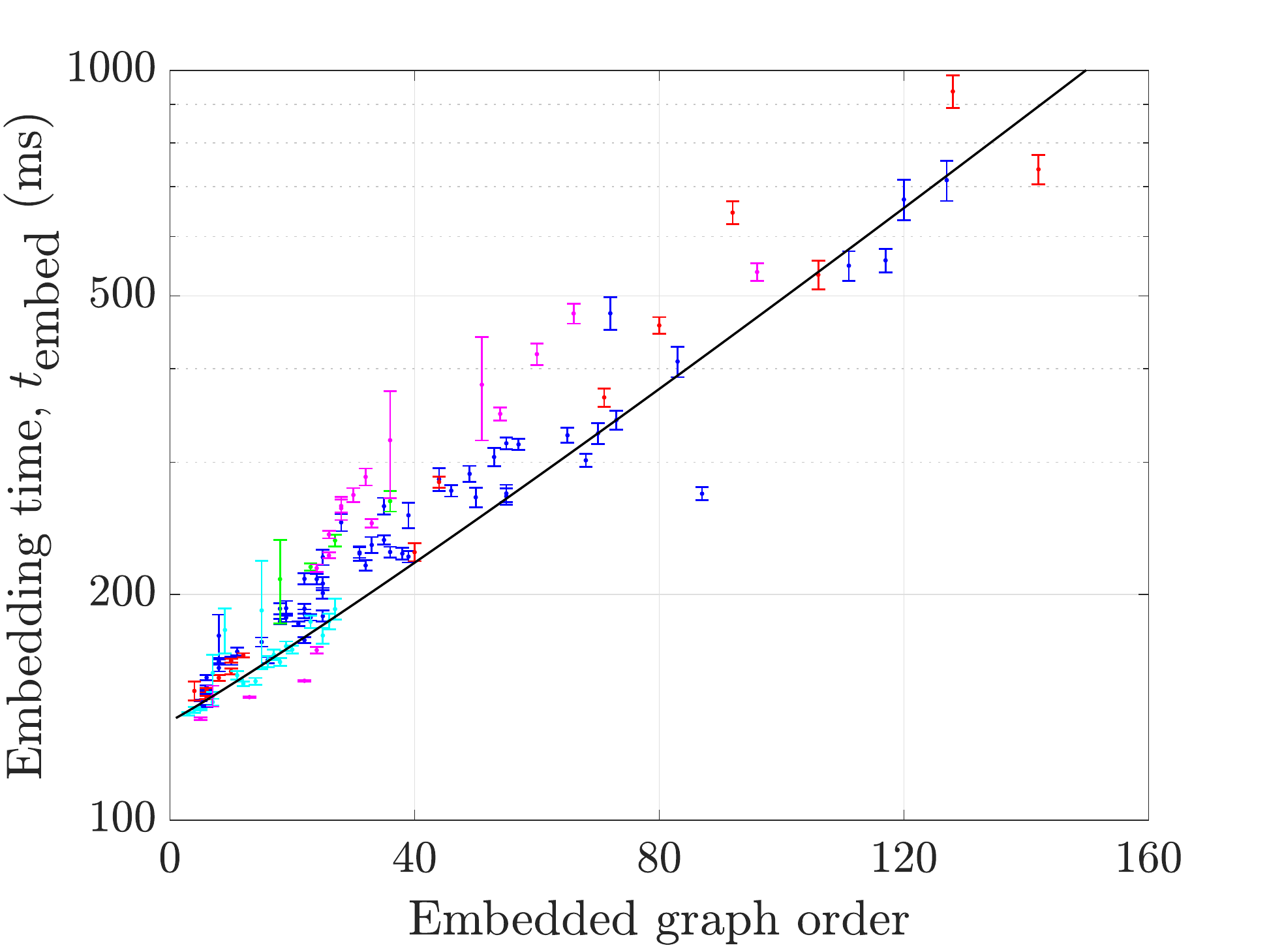}\\
	(a) & (b)
	\end{tabular}

\caption{(a) Plot of graph order $|V|$ against the embedding time $t_\text{embed}$ on a logarithmic time scale. (b) Plot of the order of the embedded graph against the embedding time $t_\text{embed}$ showing a better exponential fit. The colours show particular graph families for reference.}\label{fig:embedding}
\end{center}
\end{figure}

The overall advantage of the hybrid approach will depend not just on $t_\text{embed}$, but how this relates to the rest of the annealing times.
To study this more directly, it is useful to look at the ``hybrid speedup ratio'' $R_H = T_\text{std}/T_H$; the larger this value, the more advantage the hybrid approach provides.
In Figure~\ref{fig:hybridSpeedup}(a) we plot this as a function of the graph order $|V|$.
This shows a general tread of increasing $R_H$ (excepting a handful of points with $R_H=1$ for which a trivial embedding was possible, giving $t_\text{embed}=0$).
Although the complexity of the quantum annealing algorithm is \textit{a priori} unknown, it is expected (as for $t_\text{embed}$) to exhibit exponential scaling with $T_H \propto \exp(k_H\cdot n^{\ell_H})$, and one has $R_H = 1+99 t_\text{embed}/T_H$.
Performing a nonlinear regression with such a model shows that indeed $R_H$ appears to be increasing exponentially.
However, there is significant variation between families of graphs: while $R_H$ grows quickly for the complete bipartite graphs $K_{n,m}$, it is relatively constant for the Star graphs $S_n$. 
Moreover, the trend is dominated by the Cycle graphs $C_n$, which include several of the largest graphs in our problem set, meaning that the fit shown has limited generality.

As for the embedding time, it is thus useful to instead look at how $R_H$ depends on the embedded graph order, and we plot this in Figure~\ref{fig:hybridSpeedup}(b).
Although much variation remains between different families of graphs, the general trend remains and is consistent with an exponentially increasing hybrid speedup, despite the trend no longer being dominated by simple families such as the $C_n$ graphs.
We emphasise, however, that the benefit of the hybrid approach in general will depend on the problem one is solving (e.g., the number of times an embedding would need to be performed in the standard approach) and the way in which $R_H$ scales is likely to change further as newer quantum annealers become available and larger problems become solvable.

\begin{figure}[t]
\begin{center}
	\begin{tabular}{cc}
	\includegraphics[width=0.5\textwidth]{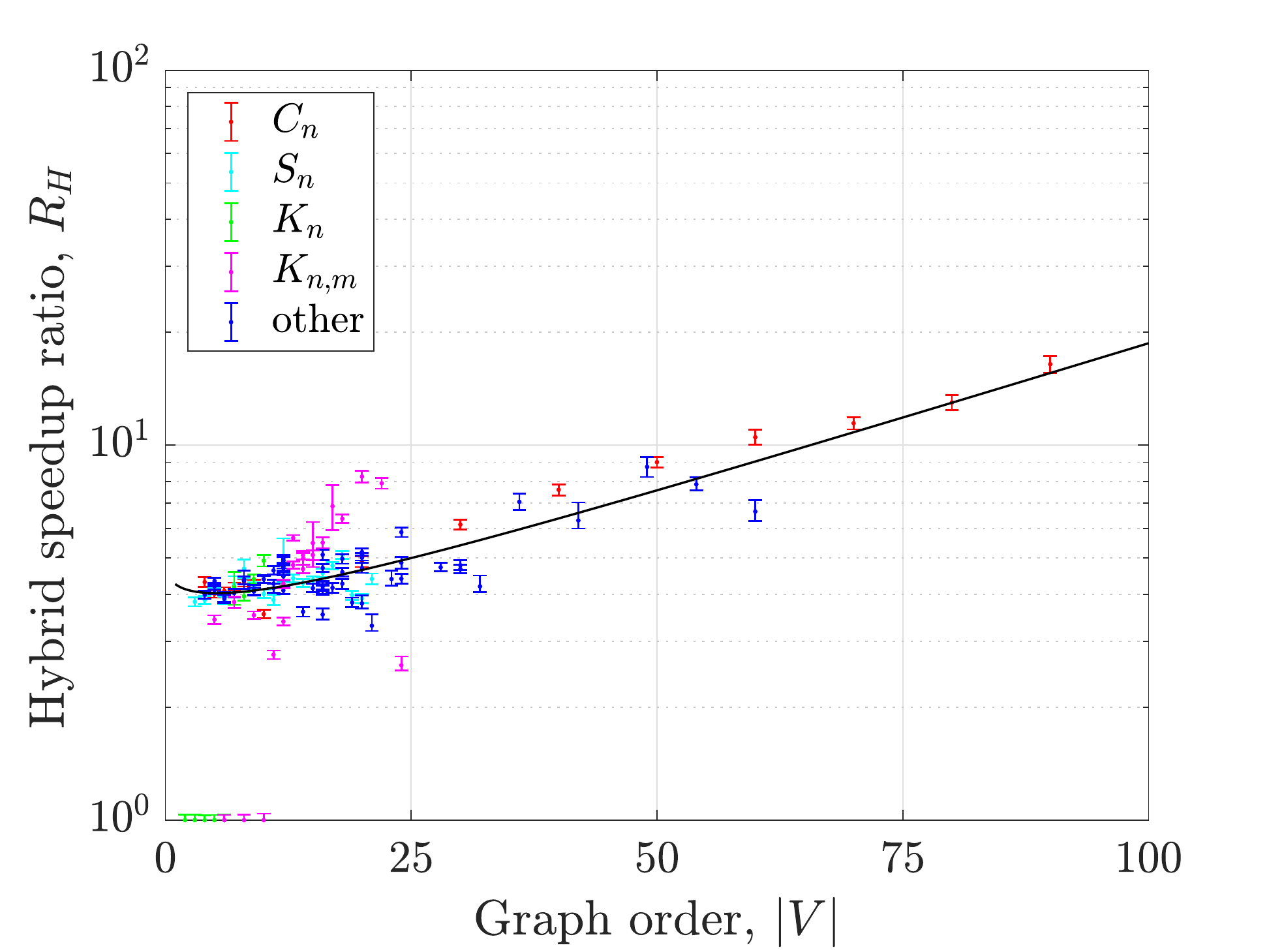} & \hspace{-5mm} \includegraphics[width=0.5\textwidth]{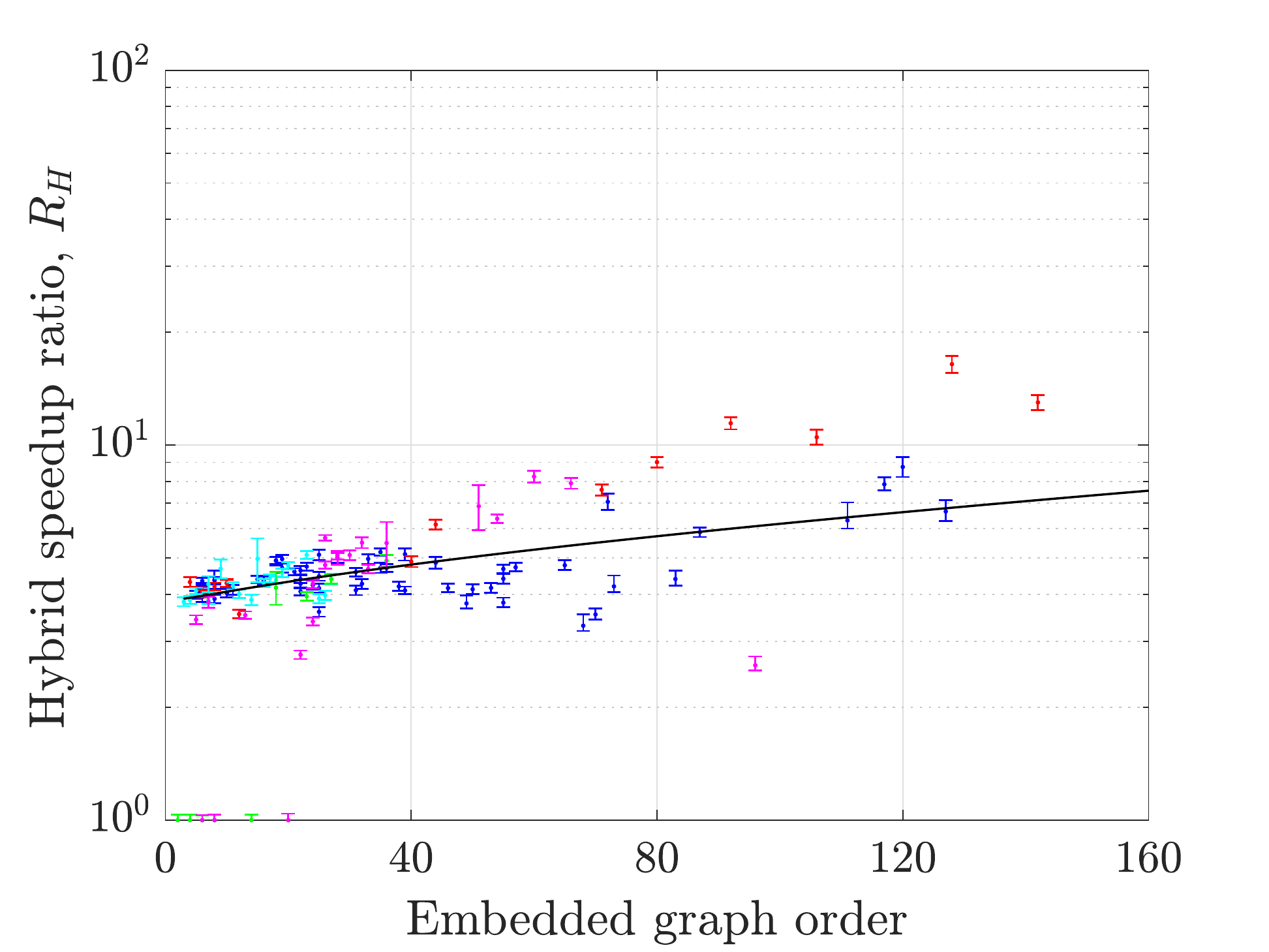}\\
	(a) & (b)
	\end{tabular}

\caption{Logarithmic plots of (a)  graph order $|V|$ against the hybrid speedup ratio $R_H$; and (b) embedded graph order against $R_H$. The colours indicate particular graph families for reference.}\label{fig:hybridSpeedup}
\end{center}
\end{figure}

From Figure~\ref{fig:TH_v_Tstd_v_TC}(b) it is evident that no absolute quantum speedup was observed using the hybrid algorithm, and indeed there is a vast difference in scale between $T_C$ and $T_H$: the ``hardest'' problem was solved classically in less than 200ms, whereas the hybrid algorithm required almost 60 times as much time to solve it correctly.
The inability to observe any raw speedup is hardly surprising when one notes that, even if $k_{99}=1$ and $t_\text{embed}=t_\text{post}=0$, the fact that $t_\text{prog}\approx 20$ms means that that one would have $T_H > 2000$ms.
The programming time thus adds an essentially constant overhead, which would have less of an impact as larger problems (for which $k_{99}$ is much larger) become solvable.

Although no overall raw speedup was observed, the experiment nonetheless illustrated the advantage of the hybrid approach over the standard quantum one which, we recall, was the primary goal.
It is nonetheless interesting to examine the scaling behaviour of the hybrid algorithm in comparison to the classical one, to see
whether there is any indication that a speedup might potentially be obtainable once the overheads (such as the embedding and programming times) are sufficiently negated.
To analyse this more carefully, it will be useful to look at the ``classical speedup ratio'' $R_C=T_H/T_C$, which provides a clearer measure of any potential speedup: a value of $R_C < 1$ thus indicates an absolute speedup for the hybrid algorithm.\footnote{We could equally look at the hybrid speedup $T_C/T_H=1/R_C$, but we choose $R_C$ because it is slightly easier to interpret visually.}

In Figure~\ref{fig:scalingOverall} we show the scaling behaviour of $R_C$ against the graph order $|V|$, which is proportional to the problem size, and the actual number of physical qubits used, i.e., the size of the embedded graph.
While the scaling of an algorithm should generally be studied with respect to problem size, as for the analysis of the hybrid speedup above, Figure~\ref{fig:scalingOverall}(a) shows a large variation between different graphs and, in particular, between the families of graphs within or set of problems.
Since the quantum annealer operates with physical, rather than logical, qubits, one expects that its scaling is better described as a function of the embedded graph size.
By examining $R_C$ as a function of this in Figure~\ref{fig:scalingOverall}(b) one thus removes part of the variation between graphs---although much still remains---and allows this scaling, and thereby the possibility of a potential speedup, to be better analysed.

\begin{figure}[t]
\begin{center}
\begin{tabular}{cc}
\includegraphics[width=0.5\textwidth]{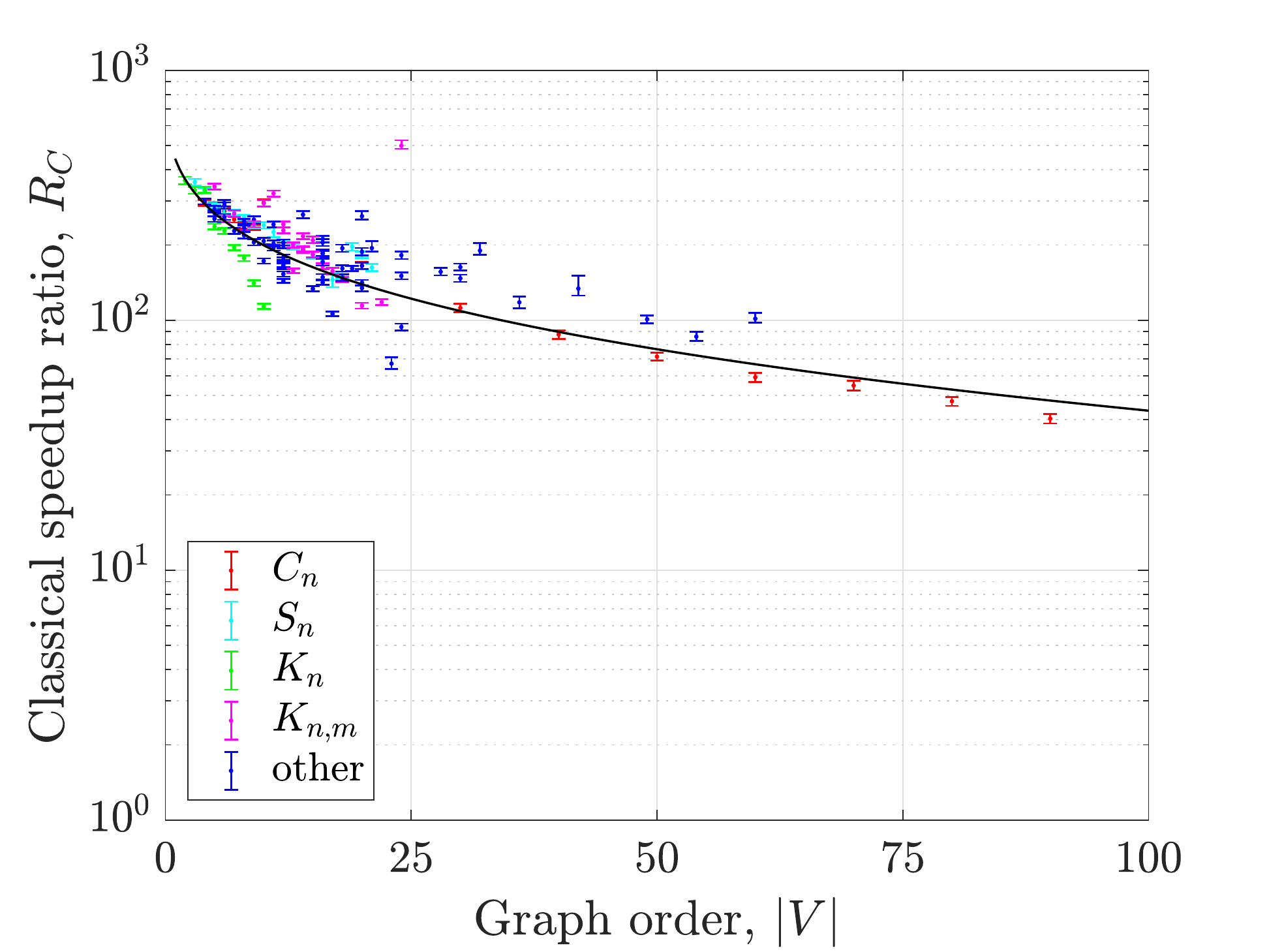} &\hspace{-5mm} 
\includegraphics[width=0.5\textwidth]{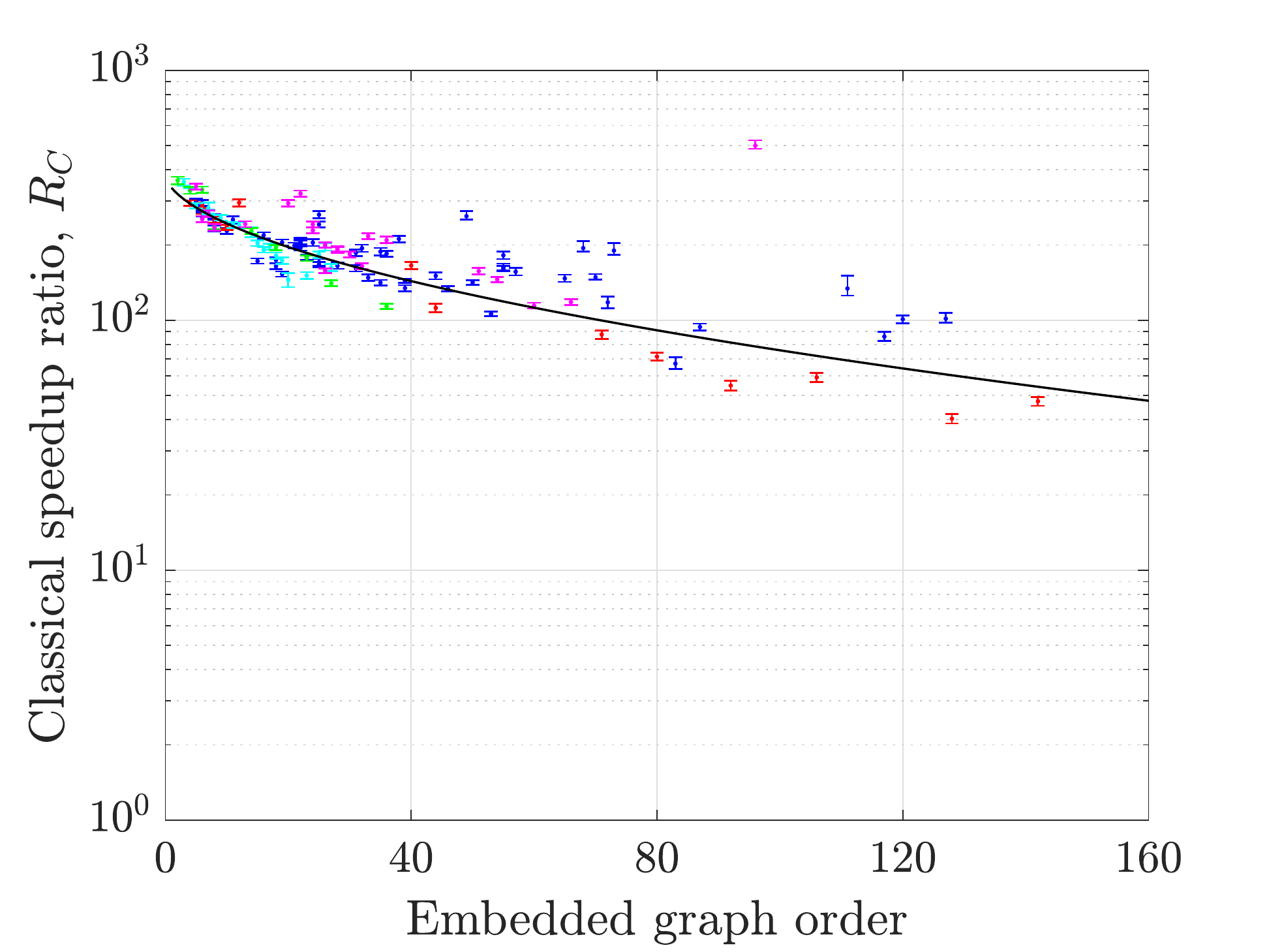}\\
(a) & (b)
\end{tabular}
\caption{Logarithmic plots of the scaling behaviour of the classical speedup ratio $R_C$ for the DWMWIS problem instances: (a) graph order $|V|$ against $R_C$; and (b) embedded graph order against $R_C$. The colours highlight particular graph families as in the previous plots.}
\label{fig:scalingOverall}
\end{center}
\end{figure}

These figures highlight once more the discrepancy between the hybrid and classical times, with the minimum classical speedup observed being $R_C= 40\pm 2$.
Both figures, however, show that $R_C$ decreases with problem size and difficulty, indicating that, for the problem instances tested, the hybrid algorithm exhibited better scaling behaviour than the BIP-based classical algorithm.
Both quantum annealing algorithms and the classical baseline we use (due to it being a relatively generic BIP algorithm) are expected to exhibit some form of exponential scaling (with respect to both the order and embedded graph size, since these differ by at most a quadratic factor), even if the precise complexity of the algorithms is \emph{a priori} unknown.

A nonlinear regression analysis shows that the scaling behaviour of $R_C$ is indeed, with respect to both $|V|$ and the embedded graph order, most consistent with $R_C \propto \exp(k_H\cdot n^{\ell_H})/\exp(k_C\cdot n^{\ell_C})$, for constants $k_H,\ell_H,k_C,\ell_C$, with the hybrid algorithm scaling slower.
With respect to $|V|$ the large variation in performance over different graph families and the fact that the scaling is largely dominated by the larger $C_n$ graphs means that little can be read into the precise form of the scaling.
While much variation remains when viewed as a function of the embedded graph order, the fit is nevertheless better in that case.

It is possible to extrapolate these fits to obtain a very crude estimate of when one might obtain $R_C=1$, at which point the hybrid and classical algorithms require the same amount of time.
One finds that this point is obtained for graphs requiring $1,200$ physical qubits.
However, the uncertainty in the scaling behaviour means there is huge uncertainty in this figure, with relatively minor changes in the parameters meaning that any estimated point of ``hybrid equality'' can vary by at least 50\% (the uncertainty is particularly large on the upper end of the scale).
Moreover, one should caution that the scaling may also change for larger problems; indeed, while the minimum annealing time of $t_\text{anneal}=20\mu$s was used for all problem instances here, for larger problems this is no longer likely to be optimal~\cite{Boixo:2014aa,King17}.
The consequent need to consider the scaling of $t_\text{anneal}$ in addition to $k_{99}$ is likely to change future scaling behaviour, as are developments and improvements in future devices (e.g.\ by decreasing errors arising from noise and limits on the control of qubits).
Such an estimate for hybrid equality should thus be taken extremely cautiously, at the very best as a crude lower bound on the size of problem that one must at least be capable of solving with a quantum annealer before any quantum advantage is obtainable, and without any guarantee that such a condition be sufficient.

While $1,200$ physical qubits is not far beyond the size of D-Wave device we used (and within the reach of more recent devices), the fact that we had to reject many graphs requiring many fewer qubits because the quantum annealer could not find the optimal solution shows that the number of physical qubits itself is not necessarily the only limiting factor in this respect.
It is also worth noting that, even in Figure~\ref{fig:scalingOverall}(b), there is significant variation between different types of graphs and, indeed, certain graph families.
It is thus interesting to also look at the scaling behaviour for different graph families individually, as one may then make more informed estimates of when an advantage may be obtained on such graphs even if such families are not representative of arbitrary problems (both for the quantum and classical algorithms).
In Figure~\ref{fig:scalingFamilies} we show this for the cycle graphs $C_n$, star graphs $S_n$ and the complete graphs $K_n$ (each plotted as a function of $n$); as the $K_{n,m}$ graphs show much greater variability and have two parameters we avoid analysing them further here.

\begin{figure}[t]
\begin{center}
\begin{tabular}{ccc}
\hspace{-3mm}\includegraphics[width=0.36\textwidth]{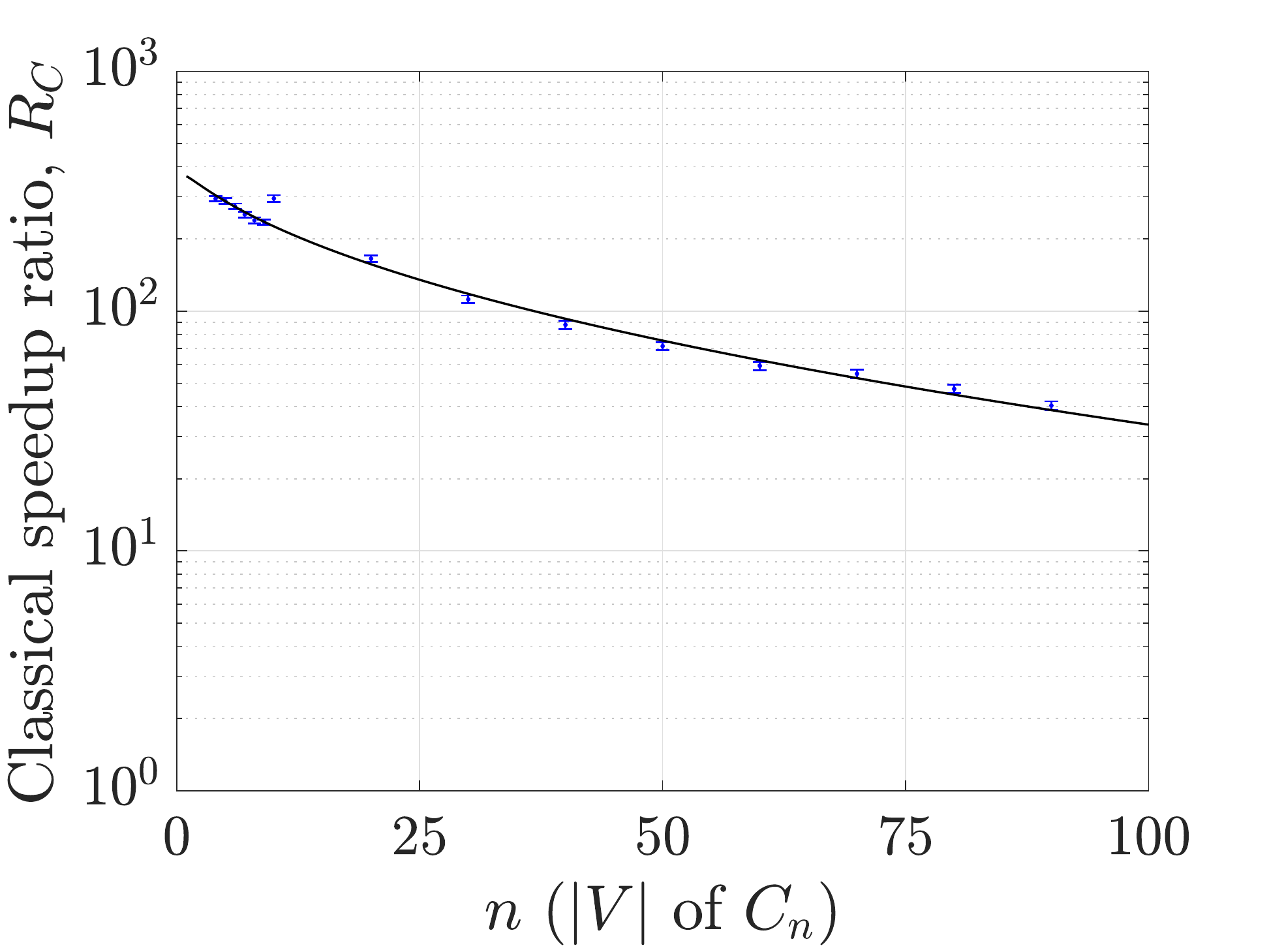} &\hspace{-10mm} \includegraphics[width=0.36\textwidth]{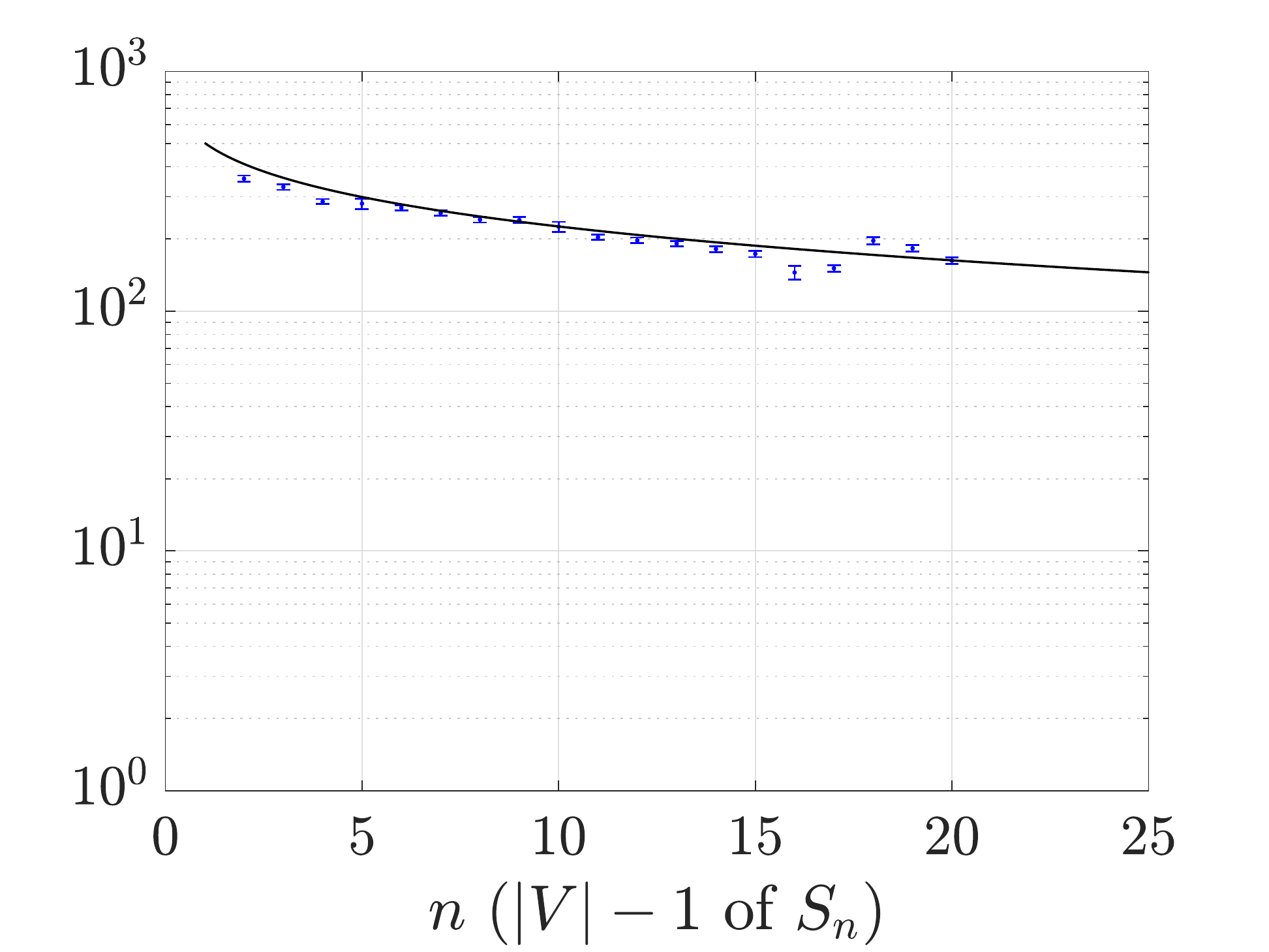} & \hspace{-10mm} \includegraphics[width=0.36\textwidth]{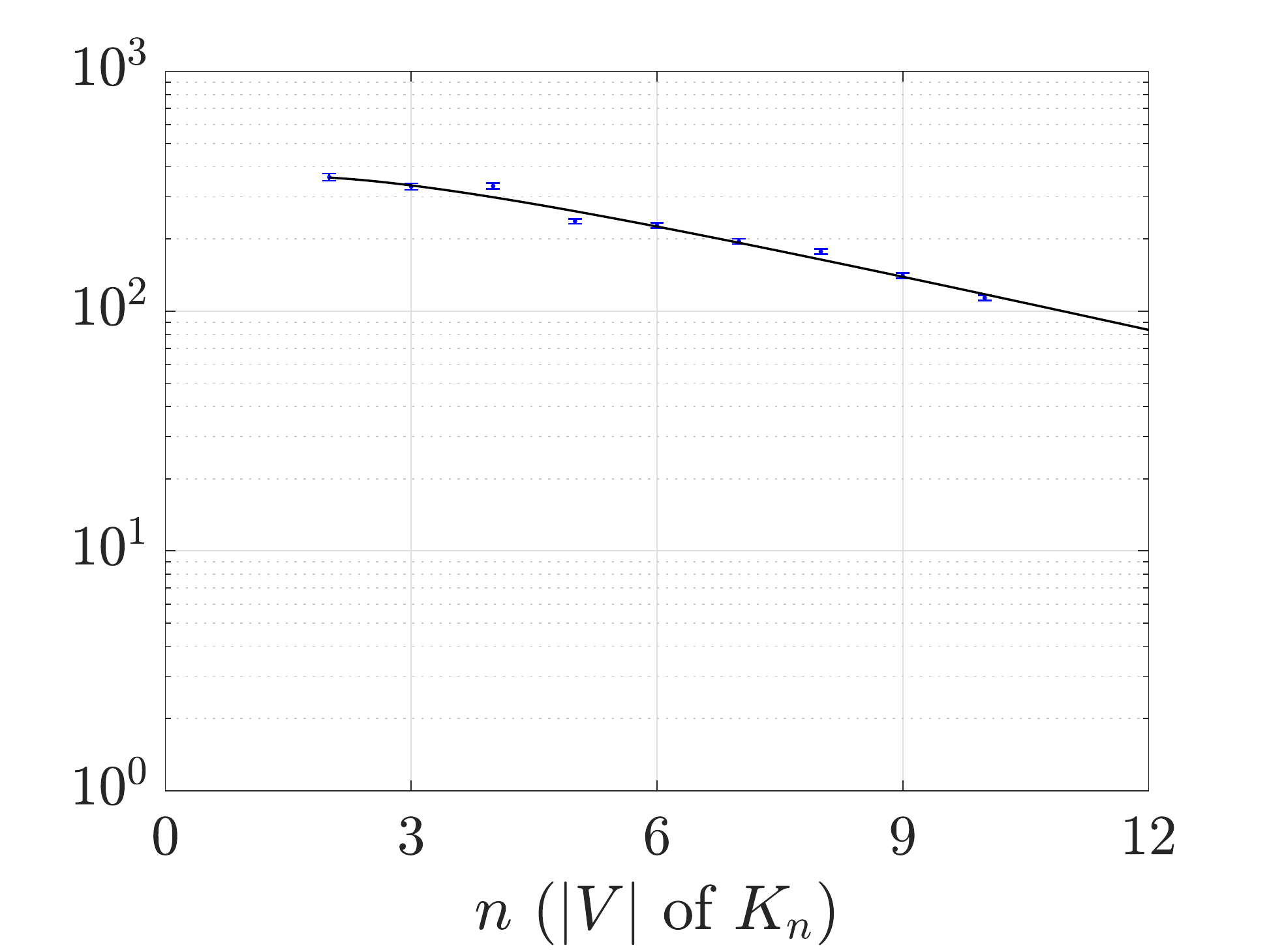}\\
(a) & (b) & (c)
\end{tabular}
\caption{Plots of the classical speedup ratio $R_C$ against $n$ for three families of graphs parameterised by $n$: (a) the $C_n$ graphs; (b) the $S_n$ graphs; (c) the $K_n$ graphs.}
\label{fig:scalingFamilies}
\end{center}
\end{figure}

Again the scaling behaviour is found to be consistent with a ratio of exponentials, but with much less uncertainty (note that, nonetheless, the log-scale used in Figure~\ref{fig:scalingFamilies} makes the uncertainty look smaller that it remains).
From these fits, we can extrapolate to estimate lower bounds on the point of ``hybrid equality'' (i.e., when $R_C=1$) for these three families as being obtained for $C_{580}$, $S_{5618}$ and $K_{38}$, respectively.
This provides a useful, albeit rough, estimate of when our algorithm might show a hybrid speedup on graphs taken from these families.
A necessary requirement is thus the ability for the heuristic embedding algorithm used in our hybrid algorithm to embed such logical graphs in the physical graph.

Of course, for such families one can generally devise analytic approaches to provide much smaller embeddings than the heuristic approach would find.
Indeed, cycle graphs permit small embeddings and $C_{580}$ can be embedded in the Chimera graph $\chi_{10}$ with $800$ physical qubits%
\footnote{A simple argument shows that there exists at cycle of length at least $\frac{7}{8}|\chi_{n}|$ by finding a cycle connecting
the bipartite blocks, where at least 7 of 8 vertices of each $K_{4,4}$ are spliced into a bigger cycle.}
while, as mentioned earlier in Section~\ref{sec:Chimera}, $K_{38}$ can also be embedded in $\chi_{10}$ and
$S_{5618}$ would require a much larger $\chi_{31}$ graph.%
\footnote{Another argument shows that we can construct in $\chi_{n}$ a spanning caterpillar with $2n^2$ spine vertices with $6n^2$
leaves.  Contracting the spine vertices. gives a minor embedding of $S_{6n^2}$.}
However, we emphasise that our algorithm is necessarily general and must thus be applicable to arbitrary problems.
Indeed, it is important to note that, if one were to tailor the algorithm for specific graph families, then much more efficient classical algorithms can easily be found.
For example, the MWIS of a $K_n$ graph can simply be computed as $\max_{v\in V}w(v)$ since the only independent sets are
singletons.
More generally, for families with low tree-width (which includes both cycle and star graphs), efficient algorithms are well known~\cite{kask05}.

In practice, one would thus need somewhat larger Chimera graphs to embed these graphs than the figures above suggest.
Nonetheless, they provide useful lower bounds on the size of a quantum annealer required to embed the problems for which hybrid equality might be expected to be obtained.
Since the D-Wave 2X device we used has a $\chi_{12}$ physical graph (albeit with some physical qubits disabled), one is not far from being able to embed the $K_{38}$ and $C_{580}$, and indeed this is probably feasible with newer devices.
However, our results showed that, at least for certain families of graphs, the prohibitory factor to obtaining a potential quantum speedup is not the number of physical qubits, but the stability and control one has over those qubits.
This is pointedly highlighted by noting that many problems that are easily embeddable in   D-Wave 2X's physical graph nonetheless fail to be solved by it~\cite{maxindepsetqubo2017} and, indeed, the larger problems in the graph families we solved were on the edge of what we could solve given the physical resources and time available to us.
The precision with which parameters can be controlled may play a major role in this~\cite{Pudenz:2016aa,albash19,PhysRevA.88.062314} and mitigating this will be a major challenge in the search for a practical quantum speedup.

Given discussions above, these estimates should only be seen as very conservative lower bounds for when a hybrid speedup may become obtainable, at least for some problem instances: not only may the scaling behaviour change for larger problem instances, but one should also recall that a speedup over a particular classical algorithm---here
the BIP-based solver---only proves a potential quantum speedup.
Indeed, as we noted above, for certain graph families very efficient solutions exist, while one would expect more efficient classical algorithms for the DWMIS problem to exist.
Nonetheless our results show that a ``potential'' quantum speedup remains plausible in the future for the DWMWIS problem, even if it is currently beyond the capabilities of the D-Wave annealer.

While our results  failed to find a quantum speedup and produced only tentative
evidence that such a speedup might be obtainable in the future for the DWMWIS
problem, the experiment was a successful proof-of-concept for the hybrid
paradigm we have presented.  In particular, the hybrid algorithm we presented
provided large absolute gains over the standard quantum approach and showed
good scaling behaviour.  As larger and more efficient devices become available
and more problems of practical interest are studied, it will become clearer if/when
a quantum speedup might be obtainable in practise.
A more detailed study using a TTT metric might allow larger problems to be studied (since one need not find an exact solution to each problem instance) and thereby lead to a better understanding in this direction.
However, this goes beyond the scope of the simple experiment we performed here and represents an interesting challenge for future research.

\section{Conclusion}
\label{sec:conclusions}

In this paper, we presented a hybrid quantum-classical paradigm for exploiting raw quantum speedups in quantum annealers.
Our paradigm is relevant in particular for devices in
which physical qubits have limited connectivity, where a problem of interest
must be embedded into the graph this connectivity imposes. This problem is a
major, but often neglected, hurdle to practical quantum computing.  Indeed, not
only does the need to find such an embedding often contribute significantly to the
overall computational costs, but the quality or size of embedding used can
often significantly affect the performance and accuracy of the quantum algorithm
itself~\cite{Lechner:2015aa,Vinci:2015aa}.

The paradigm we presented is not simply an algorithmic approach, but also aims to identify types of problems that are more amenable to quantum annealing.
In particular, we identify those problems that require solving a large number of
related subproblems, each of which can be directed solved via annealing, may
permit a hybrid approach. This is obtained by reusing and modifying embeddings for the related
subproblems.
Previous applications of quantum annealers have focused on problems that are not easily subdivided in this way, so even when only very simple reuse of embeddings is required---as in the case study we presented---the realisation that quantum annealing may be more advantageous for such problems is already important.
One can, however, envisage problems where the reuse of embeddings is more involved, such as small perturbations to the logical graph~\cite{Harary97,Goyal18}.
More research is needed to identify such problems of interest where the hybrid paradigm is applicable.

To exemplify the hybrid approach in an experimental setting, we identified a simple but suitable problem, called the dynamically-weighted maximum
weight independent set problem.  We experimentally solved a large number of
such instances on a D-Wave 2X quantum annealer, and observed the expected
advantage of the hybrid algorithm over a more traditional approach in which a known
embedding is not reused.  We failed to observe a quantum speedup over
classical algorithms, although this was not the main goal of the proof-of-concept
experiment. This is perhaps unsurprising given that many examples of quantum
annealing competing well with classical algorithms are on problems specifically constructed so that embedding is not an
issue~\cite{Boixo:2014aa,Denchev:2016aa,Hen:2015aa,Ronnow:2014aa,Shin:2014aa}. 
We note that another recent experimental study of the (unweighted) maximum independent sets problem conducted on the D-Wave 2000Q machine (the generation following the   D-Wave 2X device we utilised, for which the number of qubits has been doubled), was similarly restricted to graphs with no more than 70 vertices and also failed to observe a speedup~\cite{maxindepsetqubo2017}; in principle, the weighted version of the problem should be even harder for D-Wave devices because of analogue programming errors and the extra constraints the weights impose.
Nonetheless, our hybrid algorithm showed good scaling behaviour, providing
tentative evidence that a quantum speedup might be obtainable in the future.

While the problem we implemented as a proof-of-principle is perhaps somewhat contrived, it illustrates the advantage and feasibility of our hybrid approach and sets the groundwork for addressing more complex problems of practical interest.
Finding such problems, in which the same embedding can be reused multiple times, is itself a major step towards finding practical uses for quantum computers in the near term future.
One possible, more realistic, such problem is the decoding of error correcting codes~\cite{chancellor16}, and studying such problems would be an interesting next step towards obtaining quantum speedups from hybrid approaches.

\section*{Acknowledgements}
We thank  N.~Allen, C.~McGeoch, K.~Pudenz and S.~Reinhardt for fruitful discussions and critical comments.
This work has  been supported in part by the  Quantum Computing Research Initiatives at Lockheed Martin. 
We also thank the referees for detailed reports which improved the content and presentation of the paper.

\nocite{apsrev41Control} 
\bibliographystyle{apsrev4-1_modified}
\bibliography{LHM}

\appendix

\newpage
\section{Summary of results for MWDWIS instances}\label{app:resultSummary}

All the standard graphs were produced using SageMath~\cite{sagemath} and descriptions of them can be found in the corresponding API; the sole exception is the Dinneen Graph, which is described in Ref.~\cite{DeTemple93}.

\makeatletter
\def\dual#1{\expandafter\dual@aux#1\@nil}
\def\dual@aux#1/#2\@nil{\begin{tabular}{@{}c@{}}#1\\\raisebox{4pt}{#2}\end{tabular}}
\makeatother

\begin{table}[h!]
	\begin{center}
		\caption{Table summarising the 124 graphs defining the DWMWIS problem instances and the average times for the hybrid algorithm, the classical BIP-based algorithm, and the standard quantum annealing approach.}\label{table:resultSummary}
	\end{center}
\end{table}
\vspace{-5mm}
\begin{center}
		\begin{longtable}{lccccccc}
			\hline
			\hline
				Graph $G=(V,E)$		 & $|V|$ & $|E|$ & \dual{max/chain} & \dual{embedded/graph order}   &  $T_H$ (ms)   &    $T_C$ (ms)   &   $T_\text{std}$ (ms)\\
				\hline
			    Bidiakis Cube        & 12    &    18 &2 &18   & $4635\pm{102}$   &   $26.6\pm 0.4$    &   $22851\pm 184$\\
			    Blanusa Snark 1  & 18    &    27 &3 &33   & $5799\pm{120}$   &   $39.2\pm 0.8$    &   $28846\pm 591$  \\  
			    Blanusa Snark 2 & 18    &    27 &2 &31   & $6280\pm{139}$   &   $38.9\pm 0.7$    &   $28802\pm 405$    \\
			    Brinkmann            & 21    &    42 &4 &68   &  $12988^{+861}_{-363}$   &   $66.8\pm 0.6$    &   $42876^{+1047}_{-698}$    \\
			    Bucky Ball           & 60    &    90 &3 &127   &  $12491^{+599}_{-286}$   &   $123.1\pm 3.3$    &   $83128^{+4462}_{-4431}$    \\
			    Bull                 &  5    &     5 &2 &6   & $4379\pm{90}$   &   $16.4\pm 0.3$    &   $18427\pm 99$    \\
			    Butterfly            &  5    &     6 &2 &6   & $4405\pm{91}$   &   $17.3\pm 0.3$    &   $19137\pm 99$    \\
				$C_4$                &  4    &     4 &1 &4   &   $4441\pm{89}$   &   $15.1\pm 0.2$    &   $19162\pm 447$    \\
				$C_5$                &  5    &     5  &2 & 6  & $4785\pm{109}$   &   $16.6\pm 0.3$    &   $19209\pm 133$    \\
				$C_6$                &  6    &     6  &1 & 6  & $4781\pm{103}$   &   $17.5\pm 0.3$    &   $19532\pm 140$    \\
				$C_7$                &  7    &     7  &2 & 8  &   $4785\pm{102}$   &   $18.9\pm 0.4$    &   $20110\pm 176$    \\
				$C_8$                &  8    &     8  &2 & 10  & $4743\pm{102}$   &   $19.8\pm 0.4$    &   $20375\pm 174$    \\
				$C_9$                &  9    &     9  &2 & 10  & $4927\pm{107}$   &   $20.9\pm 0.3$    &   $21084\pm 149$    \\
			    $C_{10}$             & 10    &    10  &2 & 12  & $6453\pm{161}$   &   $21.9\pm 0.5$    &   $22877\pm 194$    \\
			    $C_{20}$             & 20    &    20  &3 &40  & $5788\pm{142}$   &   $35.0\pm 0.7$    &   $28330\pm 640$    \\
			    $C_{30}$             & 30    &    30  &2 &44  &   $5436\pm{135}$   &    $48.5\pm 1.3$    &   $33394\pm 512$    \\
			    $C_{40}$             & 40    &    40  &3 & 71  & $5490\pm{123}$   &    $62.6\pm 2.0$    &   $41743\pm 1043$    \\
			    $C_{50}$             & 50    &    50  &2 &92  & $5644\pm{123}$   &    $78.9\pm 2.2$    &   $50867\pm 1190$    \\
			    $C_{60}$             & 60    &    60  &3 &106  & $5560\pm{120}$   &   $94.1\pm 3.3$    &   $58397\pm 2378$     \\  
			    $C_{70}$             & 70    &    70  &2 & 92  & $6122\pm{117}$   &   $111.8\pm 4.3$    &   $70066\pm 2245$    \\    
			    $C_{80}$             & 80    &    80  &4 &142  & $6084\pm{123}$   &   $128.4\pm 4.7$    &   $79117\pm 3279$    \\  
			    $C_{90}$             & 90    &    90 &2 &128   & $6006\pm{120}$   &   $148.8\pm 5.6$    &   $98769\pm 4681$    \\    
			    Chvatal              & 12    &    24 &3 &25   & $5899^{+124}_{-122}$   &   $35.4\pm 0.4$    &   $26372^{+439}_{-438}$    \\
			    Clebsch              & 16    &    40 &4 &50   & $8527^{+172}_{-160}$   &   $60.2\pm 0.6$    &   $35207^{+818}_{-816}$    \\
			    Coxeter              & 28    &    42&3 &57    & $8424^{+205}_{-181}$   &   $53.9\pm 1.3$    &   $39807^{+575}_{-567}$    \\
			    Desargues            & 20    &    30&2 &28    & $6160^{+126}_{-124}$   &   $37.3\pm 0.7$    &   $30861\pm 672$    \\
			    Diamond              &  4    &     5 &2 &5   &   $4783\pm{106}$   &   $16.0\pm 0.2$    &   $19089\pm 111$    \\
			    Dinneen              &  9    &    21 &3 &22   & $6072\pm{126}$   &     $29.6\pm 0.6$    &   $24724\pm 285$    \\
			    Dodecahedral         & 20    &    30 &3 &39   & $6128^{+124}_{-122}$   &   $45.6\pm 0.9$    &   $31373^{+997}_{-996}$    \\
			    Double Star Snark    & 30    &    45 & 3 &65   & $8527^{+214}_{-192}$   &   $58.0\pm 1.3$    &   $40801^{+773}_{-767}$    \\
			    Durer                & 12    &    18  &2 &19  & $4643\pm{100}$   &   $30.3\pm 0.3$    &   $23076\pm 254$    \\
			    Dyck                 & 32    &    48  &3 &73  &  $10562^{+673}_{-275}$   &   $55.6\pm 1.6$    &   $44380^{+1185}_{-1013}$    \\
			    Ellingham Horton  & 54    &    81 &3 &117   &   $8043^{+232}_{-152}$   &   $93.5\pm 3.0$    &   $63265^{+2007}_{-1999}$    \\
			    Errera               & 17    &    45 &4 &53    & $9543^{+201}_{-182}$   &   $90.1\pm 0.9$    &   $39738^{+867}_{-863}$    \\
			    Flower Snark         & 20    &    30 &2 &35   & $5589\pm{105}$   &    $39.5\pm 0.7$    &   $28992\pm 341$    \\
			    Folkman              & 20    &    40 &4 & 49   &  $10293^{+471}_{-258}$   &   $39.5\pm 0.7$    &   $38964^{+853}_{-757}$    \\
			    Franklin             & 12    &    18 &2 &21   & $5030\pm{99}$   &   $25.3\pm 0.4$    &   $23127\pm 165$    \\
			    Frucht               & 12    &    18 &2 & 18   & $4842\pm{101}$   &   $29.5\pm 0.5$    &   $23791\pm 349$    \\
			    Goldner Harary       & 11    &    27 &4 &22   & $5716^{+132}_{-119}$   &   $28.2\pm 0.4$    &   $26486^{+381}_{-377}$    \\
			    $2\times 3$ Grid     &  6    &     7 &1 &6   & $5073\pm{140}$   &   $17.6\pm 0.2$    &   $19972\pm 162$    \\
			    $3\times 3$ Grid     &  9    &    12 &2 &11   &   $5336\pm{150}$   &   $21.1\pm 0.3$    &   $21948\pm 258$    \\
			    $3\times 4$ Grid     & 12    &    17 &3 & 19   & $5122\pm{107}$   &   $25.0\pm 0.4$    &   $24100\pm 447$    \\
			    $4\times 4$ Grid     & 16    &    24  &2 &25  & $5409\pm{140}$   &   $31.7\pm 0.6$    &   $27605\pm 551$    \\
			    $4\times 5$ Grid     & 20    &    31  &3 & 35  & $6999^{+155}_{-153}$   &   $37.2\pm 1.0$    &   $32956\pm 693$    \\
			    $6\times 6$ Grid     & 36    &    60  &3 &72  & $7743^{+195}_{-184}$   &   $65.7\pm 3.3$    &   $54679^{+2383}_{-2382}$    \\
			    $6\times 7$ Grid     & 42    &    71  &4 & 11  &  $10252^{+1122}_{-287}$   &   $76.6\pm 4.4$    &   $64583^{+2739}_{-2516}$    \\
			    $7\times 7$ Grid     & 49    &    84  &4 & 120  & $8591^{+213}_{-183}$   &    $85.2\pm 2.3$    &   $75158^{+4197}_{-4195}$    \\
			    Grotzsch             & 11    &    20  &3 &22  & $5793\pm{133}$   &   $29.7\pm 0.3$    &   $24741\pm 324$    \\
			    Heawood              & 14    &    21  & 3 &25  & $7663^{+197}_{-193}$   &   $29.0\pm 0.6$    &   $27542^{+380}_{-379}$    \\
			    Herschel             & 11    &    18  &3 &25  & $5871\pm{145}$   &   $24.3\pm 0.3$    &   $24394\pm 349$    \\
			    Hexahedral           &  8    &    12  &1 &8  & $4803\pm{106}$   &   $20.6\pm 0.3$    &   $20920\pm 145$    \\
			    Hoffman              & 16    &    32  &4 &38  & $7010^{+168}_{-167}$   &    $33.2\pm 0.6$    &   $29453\pm 433$    \\
			    House                &  5    &     6  &2 &6  & $4700\pm{110}$   &   $16.9\pm 0.3$    &   $19292\pm 113$    \\
			    Icosahedral          & 12    &    30  &4 &39  & $7177^{+138}_{-125}$   &   $50.0\pm 0.4$    &   $29413^{+422}_{-418}$    \\
				$K_2$                &  2    &  1 &   1 &2    & $4607\pm{109}$   &   $12.7\pm 0.3$    &  $4607\pm 109$    \\
				$K_3$                &  3    &     3 &2 &4    & $4821\pm{118}$   &   $14.6\pm 0.3$    &  $4821\pm 118$    \\
				$K_4$                &  4    &     6 & 2 & 6   &   $5875\pm{131}$   &   $17.7\pm 0.3$    &    $5875\pm 131$    \\
				$K_5$                &  5    &    10 &2 &8   & $5210\pm{119}$   &   $22.0\pm 0.2$    &  $5210\pm 119$    \\
				$K_6$                &  6    &    15 &3 &14   & $6101\pm{143}$   &   $26.8\pm 0.3$    &  $6101\pm 143$    \\
				$K_7$                &  7    &    21  &3 & 18  & $6546^{+158}_{-157}$   &    $33.5\pm 0.2$    &   $27296\pm 2667$    \\
				$K_8$                &  8    &    28  &4 &23  & $7293\pm{180}$   &   $41.2\pm 0.4$    &   $28836\pm 290$    \\
				$K_9$                &  9    &    36 &4 &27    &   $6883\pm{164}$   &   $49.0\pm 0.5$    &   $30247\pm 457$    \\
			    $K_{10}$             & 10    &    45 & 5 &36   & $6726^{+153}_{-148}$   &   $59.3\pm 0.6$    &   $33090^{+856}_{-855}$    \\
			    $K_{2,3}$            &  5    &     6  &1 &5  & $5570\pm{142}$   &   $16.3\pm 0.2$    &   $19083\pm 152$    \\	    
			    $K_{3,3}$            &  6    &     9  &1 &6  & $4486\pm{103}$   &   $17.7\pm 0.3$    &  $4486\pm 103$    \\
			    $K_{3,4}$            &  7    &    12  &1 & 7  & $5147\pm{125}$   &   $19.3\pm 0.3$    &   $19641\pm 487$    \\		    
			    $K_{4,4}$            &  8    &    16  &1 & 8  & $5036\pm{123}$   &   $21.4\pm 0.3$    &  $5036\pm 123$    \\
			    $K_{4,5}$            &  9    &    20    & 2 &13 & $5729\pm{131}$   &   $23.6\pm 0.3$    &   $20173\pm 136$    \\		    
			    $K_{5,5}$            & 10    &    25 &2 &20   & $7470\pm{215}$   &   $25.4\pm 0.3$    &  $7469\pm 215$    \\
			    $K_{5,6}$            & 11    &    30  &2 &22  & $8619\pm{212}$   &   $26.8\pm 0.3$    &   $23805\pm 216$    \\
			    $K_{5,7}$            & 12    &    35  &2 &24  & $6563\pm{155}$   &   $28.7\pm 0.4$    &   $28026\pm 292$    \\
			    $K_{5,8}$            & 13    &    40  &2 &26  & $4789\pm{74}$   &   $30.4\pm 0.4$    &   $27103\pm 214$    \\
			    $K_{5,9}$            & 14    &    45  &3 &33  & $6705^{+154}_{-151}$   &   $30.9\pm 0.5$    &   $31346^{+352}_{-351}$    \\
			    $K_{6,6}$           & 12    &    36 &2 &24   &   $6992.0\pm{159}$   &   $28.9\pm 0.3$    &   $23674\pm 231$    \\
			    $K_{6,7}$            & 13    &    42 &2 &26   & $6279.8\pm{125}$   &   $31.4\pm 0.4$    &   $30079\pm 305$    \\
			    $K_{6,8}$            & 14    &    48 &2 &28   & $6353.1\pm{131}$   &   $33.3\pm 0.5$    &   $32331\pm 539$    \\
			    $K_{6,9}$            & 15    &    54 &3 &36   & 7$089^{+192}_{-168}$   &    $33.9\pm 0.5$    &   $38878^{+5248}_{-5247}$    \\
			    $K_{7,7}$            & 14    &    49 &2 &28   & $6480\pm{132}$   &   $33.5\pm 0.4$    &   $32279\pm 941$    \\
			    $K_{7,8}$            & 15    &    56 &2 &30   & $6563\pm{154}$   &   $35.8\pm 0.5$    &   $33432\pm 599$    \\
			    $K_{8,8}$            & 16    &    64  & 2 &32  &   $6319\pm{150}$   &   $38.4\pm 0.6$    &   $34722\pm 761$    \\
			    $K_{8,9}$            & 17    &    72  &3 &51  &   $6416^{+145}_{-137}$   &   $40.8\pm 0.8$    &   $44115\pm 5996$    \\
			    $K_{9,9}$            & 18    &    81  &3 &54  & $6424\pm{134}$   &   $44.1\pm 0.6$    &   $40895\pm 712$    \\
			    $K_{10,10}$          & 20    &   100  &3 &60  & $5711\pm{109}$   &   $50.0\pm 1.0$    &   $47113\pm 1408$    \\
			    $K_{11,11}$          & 22    &   121  &3 &66  & $6782^{+134}_{-130}$   &   $57.4\pm 1.1$    &   $53698\pm 1458$    \\
			    $K_{12,12}$          & 24    &   144  &4 &96  &  $33536^{+1674}_{-852}$   &   $67.2\pm 0.7$    &   $86818^{+2241}_{-1717}$    \\
			    Kittell              & 23    &    63  &5 & 83  &  $11920^{+427}_{-217}$   &   $177.8\pm 8.2$    &   $52401^{+1959}_{-1924}$    \\
			    Krackhardt Kite      & 10    &    18  &3 &15  & $5048\pm{99}$   &   $29.3\pm 0.4$    &   $22155\pm 263$    \\
			    Markstroem           & 24    &    36 &9 &87   & $5547\pm{130}$   &    $59.0\pm 1.2$    &   $32525\pm 568$    \\
			    McGee                & 24    &    36  &3 &44  & $7309^{+155}_{-148}$   &   $48.6\pm 1.2$    &   $35504^{+1003}_{-1002}$    \\
			    Moebius Kantor       & 16    &    24  &2 &24  &   $6420\pm{155}$   &   $31.4\pm 0.6$    &   $27170\pm 361$    \\
			    Moser Spindle        &  7    &    11  &2 &10  & $5326\pm{131}$   &  $23.4\pm 0.4$    &   $21473\pm 241$    \\
			    Nauru                & 24    &    36  &3 &55  & $7862^{+180}_{-171}$   &   $43.2\pm 1.1$    &   $34622^{+702}_{-700}$    \\
			    Octahedral           &  6    &    12  &2 &8  & $5461\pm{133}$   &   $21.1\pm 0.3$    &   $21262\pm 219$    \\
			    Pappus               & 18    &    27  &2 &32  & $6618\pm{179}$   &   $34.1\pm 0.7$    &   $28259\pm 398$    \\
			    Petersen             & 10    &    15  &3 &22  & $5069\pm{108}$   &    $24.5\pm 0.4$    &   $22275\pm 183$    \\
			    Poussin              & 15    &    39  &4 &46  & $8621^{+195}_{-182}$   &   $64.6\pm 0.8$    &   $35846^{+529}_{-525}$    \\
			    $Q_3$                &  8    &    12 &1 &8   & $5153\pm{99}$   &   $20.7\pm 0.2$    &   $22597\pm 1180$    \\
			    $Q_4$                & 16    &    32 &3 &36   &   $6091\pm{121}$   &   $33.0\pm 0.6$    &   $28643\pm 391$    \\
			    Robertson            & 19    &    38 &4 &55   & $9635^{+220}_{-187}$   &   $59.9\pm 0.5$    &   $36633^{+764}_{-755}$    \\
			    $S_2$                &  3    &     2 &1 &3   & $4858\pm{127}$   &   $13.6\pm 0.2$    &   $18580\pm 147$    \\
			    $S_3$                &  4    &     3 &1 &4   & $4849\pm{105}$   &   $14.7\pm 0.2$    &   $18738\pm 171$    \\
			    $S_4$                &  5    &     4 &1 &5   & $4506\pm{85}$   &   $15.7\pm 0.3$    &   $18406\pm 93$    \\
			    $S_5$                &  6    &     5 &2 &7   & $4977\pm{103}$   &   $17.7\pm 0.8$    &   $19204\pm 178$    \\
			    $S_6$                &  7    &     6  &1 &7  & $4766\pm{102}$   &   $17.7\pm 0.3$    &   $20319\pm 899$    \\
			    $S_7$                &  8    &     7 &2 &9   & $4819\pm{98}$   &   $18.8\pm 0.3$    &   $22570\pm 1238$    \\
			    $S_8$                &  9    &     8  &3 &11  &   $4807\pm{94}$   &   $20.0\pm 0.4$    &   $20251\pm 225$    \\
			    $S_9$                & 10    &     9  &3 &12  & $4994\pm{125}$   &   $20.9\pm 0.3$    &   $20042\pm 159$    \\
			    $S_{10}$             & 11    &    10  &4 &14  & $5290\pm{156}$   &     $23.5\pm 0.9$    &   $20457\pm 222$    \\
			    $S_{11}$             & 12    &    11  &4 &15  & $4738\pm{92}$   &   $23.3\pm 0.4$    &   $23587\pm 3131$    \\
			    $S_{12}$             & 13    &    12  &5 &17  & $4814\pm{100}$   &   $24.4\pm 0.4$    &   $21258\pm 281$    \\
			    $S_{13}$             & 14    &    13  &3 &16  & $4896\pm{98}$   &   $25.6\pm 0.4$    &   $21003\pm 300$    \\
			    $S_{14}$             & 15    &    14  &4 &18  & $4772\pm{90}$   &     $26.3\pm 0.6$    &   $20860\pm 211$    \\
			    $S_{15}$             & 16    &    15  &4 &19  & $4738\pm{104}$   &   $27.3\pm 0.6$    &   $21627\pm 270$    \\
			    $S_{16}$             & 17    &    16  &4 &20  & $4432\pm{84}$   &   $30.6\pm 1.9$    &   $21143\pm 216$    \\
			    $S_{17}$             & 18    &    17 &6 &23   & $4444\pm{84}$   &   $29.5\pm 0.8$    &   $22650\pm 361$    \\
			    $S_{18}$             & 19    &    18  &8 &26  & $6113\pm{122}$   &    $31.1\pm 0.8$    &   $24339\pm 471$    \\
			    $S_{19}$             & 20    &    19  &6&25  & $6020\pm{123}$   &   $32.9\pm 0.7$    &   $23474\pm 443$    \\
			    $S_{20}$             & 21    &    20  &7 &27  & $5569\pm{121}$   &   $34.3\pm 0.8$    &   $24497\pm 619$    \\
			    Shrikhande           & 16    &    48  &5 &70  &  $12803^{+367}_{-244}$   &   $86.1\pm 0.7$    &   $45275^{+1106}_{-1072}$    \\
			    Sousselier           & 16    &    27  &3 &31  & $7231^{+171}_{-169}$   &   $38.9\pm 0.8$    &   $29675^{+531}_{-530}$    \\
			    Thomsen              &  6    &     9  &1 &6  & $5220\pm{137}$   &   $17.7\pm 0.2$    &   $20555\pm 190$    \\
			    Tietze               & 12    &    18  &3 &23  & $4927\pm{113}$   &   $27.6\pm 0.3$    &   $23380\pm 216$    \\
			    TutteCoxeter         & 30    &    45  &3 &55  & $8566^{+191}_{-179}$   &   $52.5\pm 1.2$    &   $40058^{+594}_{-591}$    \\
			    Wagner               &  8    &    12  &2 &16  & $4817\pm{111}$   &   $22.0\pm 0.3$    &   $21004\pm 191$	  \\
				\hline
				\hline
		\end{longtable}
	\end{center}

\begin{table}[h!]
\caption{A list of 32 small graphs for which a full analysis could not be performed due to the quantum annealer not finding enough optimal solutions on each of the 100 MWIS instances.}\label{table:resultSummary2}

\medskip

\begin{center}
\begin{tabular}{lcccc} \hline
				\hline
Graph $G=(V,E)$		 & $|V|$ & $|E|$    &  \dual{max/chain} & \dual{embedded/graph order} \\ \hline
Balaban 10-Cage & 70 & 105 & 5 & 232 \\
BiggsSmith & 102 & 153 & 8 & 358 \\
Ellingham Horton 2 & 78 & 117 & 3 & 172 \\
Foster & 90 & 135 & 5 & 285 \\
Gray & 54 & 81 & 4 & 148 \\
$6\times 8$ Grid & 48 & 82 & 4 & 118 \\
$6\times 9$ Grid & 54 & 93 & 5 & 134 \\
$7\times 8$ Grid & 56 & 97 & 4 & 141 \\
$7\times 9$ Grid & 63 & 110 & 4 & 149 \\
$8\times 8$ Grid & 64 & 112 & 4 & 151 \\
$8\times 9$ Grid & 72 & 127 & 5 & 201 \\
$9\times 9$ Grid & 81 & 144 & 4 & 205 \\
Harries & 70 & 105 & 5 & 201 \\
Harries Wong & 70 & 105 & 5 & 205 \\
Hoffman Singleton & 50 & 175 & 14 & 501 \\
Horton & 96 & 144 & 5 & 218 \\
$K_{11}$ & 11 & 55 & 5 & 50 \\
$K_{12}$ & 12 & 66 & 5 & 53 \\
$K_{13}$ & 13 & 78 & 5 & 56 \\
$K_{14}$ & 14 & 91 & 6 & 72 \\
$K_{15}$ & 15 & 105 & 7 & 85 \\
Ljubljana & 112 & 168 & 7 & 390 \\
Meredith & 70 & 140 & 7 & 210 \\
$Q_5$ & 32 & 80 & 6 & 128 \\
$Q_6$ & 64 & 192 & 11 & 429 \\
Schlaefli & 27 & 216 & 11 & 234 \\
SimsGewirtz & 56 & 280 & 25 & 812 \\
Sylvester & 36 & 90 & 8 & 177 \\
Szekeres Snark & 50 & 75 & 6 & 129 \\
Tutte 12-Cage & 126 & 189 & 10 & 529 \\
Wells & 32 & 80 & 8 & 138 \\
Wiener Araya & 42 & 67 & 4 & 99 \\
\hline \hline
\end{tabular}
\end{center}
\end{table}

\end{document}